\newtheorem{thm}{Theorem}[section]
\newtheorem{cor}[thm]{Corollary}
\newtheorem{lem}[thm]{Lemma}
\newtheorem{prop}[thm]{Proposition}
\theoremstyle{definition}
\newtheorem{defn}[thm]{Definition}
\theoremstyle{remark}
\newtheorem{rem}[thm]{Remark}
\numberwithin{equation}{section}
\newtheorem{exmp}[thm]{Example}
\numberwithin{equation}{section}
\newtheorem{cntxmp}[thm]{Counterexample}
\numberwithin{equation}{section} 
\numberwithin{equation}{section} \theoremstyle{quest}
 \newtheorem{quest}[]{Question}
 \numberwithin{equation}{section} \theoremstyle{fact}
 \numberwithin{equation}{section}
 \theoremstyle{facts}
\numberwithin{equation}{section}
\begin{document}

\title[SamplingMMSp]{A Simple Sampling Method for 
Metric Measure Spaces}

\author{Emil Saucan}

\address{Department of Mathematics, Technion, Haifa, Israel}%
\email{semil@tx.technion.ac.il}%

\thanks{Research supported by 
by European Research Council under the European Community's Seventh Framework Programme
(FP7/2007-2013) / ERC grant agreement n${\rm ^o}$ [203134].}%
\subjclass{94A20, 60D05, 30L10, 52C23.}%
\keywords{Sampling, generalized Ricci curvature, snowflaking operator, bilipschitz embedding}%

\date{\today}

\begin{abstract}
We introduce a new, simple metric method of 
sampling 
metric measure spaces, based on a well-known ``snowflakeing operator''
and we show that, as a consequence of a classical result of Assouad, the sampling of doubling metric spaces is bilipschitz equivalent to that of subsets of some $\mathbb{R}^N$. Moreover, we compare this new method with 
two other approaches, in particular to one that represents a direct application 
of our triangulation method of metric measure spaces satisfying a generalized Ricci curvature condition.
\end{abstract}

\maketitle



\section{Introduction}

Sampling theory has its 
origins
 in the broad field of electrical engineering, more precisely in communication theory, namely in the works of Kotelnikov \cite{Ko}, Nyquist \cite{Ny} and mainly in the seminal papers of Shannon \cite{Sh48}, \cite{Sh}, \cite{Sh-geo}. Naturally, further developments along these lines followed, e.g. \cite{lan3}, \cite{ML}, \cite{SZ2}, as well as natural applications to signal and image processing (see  \cite{U} for an extensive overview\footnote{The importance of sampling theory in the above mentioned fields is also emphasized by the existence of a journal dedicated to this subject: {\it Sampl. Theory Signal Image Process.}}), and also to the related field of graphics (e.g. \cite{CDR}, \cite{DZM}, \cite{LL}). 
Other applications include (but are not restricted to) information theory \cite{Za} and  learning \cite{SZ2}. However, sampling theory rapidly transcended the boundaries of these applicative fields, to become a field of separate and sustained 
interest -- see, e.g. \cite{lan2}, \cite{SZ1}, \cite{Pes}, \cite{Pes1}, amongst others.

Recently, driven mainly by what has become by now a common practice amongst the image processing community, namely to regard images as Riemannian manifolds embedded in higher dimensional spaces (usually $\mathbb{R}^n$ or $\mathbb{S}^n$) -- see e.g. \cite{Ha}, \cite{SZ}, \cite{SL}, \cite{KMS}, \cite{SAZ0} -- a geometric approach to sampling and its implications has emerged \cite{KM}, \cite{SAZ}, \cite{SAZ1}, \cite{Sa1}.

A quite recent theoretical development concerns manifolds with densities \cite{Mo}, \cite{CHHWSX}, and even the more general metric measure spaces \cite{Gr-carte}, \cite{LV}, \cite{St}.
Such objects are not of pure mathematical interest, they arise naturally in a number of applicative fields. 
Amongst these applications, we first mention what is, perhaps, the the most immediate one, namely that of imaging. Indeed, classical grayscale (so called ``natural'') images can be viewed as densities over, say, the unit square. ``Weighted'' manifolds arise naturally in medical imaging, for the density of many types of MRI images is equal to the very proton density. Therefore, a correct modeling of such images, starting from the basic steps of sampling and reconstruction, will arguably produce more accurate results than the ones obtained with present methods. Manifold with densities are also encountered in information geometry \cite{AN}, manifold learning, pattern recognition  (both of these in conjunction with imaging) \cite{OW}, \cite{NajAh}, \cite{PS}, bioinformatics \cite{SA}, graphics \cite{KS} and, perhaps most naturally, communication networks \cite{Sa2}. (In this last case, node and edge weights represent the relevant, characteristic features of the network.)

The 
purpose of this article is rather straightforward: To introduce a simple 
sampling method for metric measure spaces and to compare it to other approaches.
Fittingly,
the structure of the paper is also quite simple: In the following section we discuss 
 two other sampling methods, the accent being placed upon the one stemming from our previous curvature-based triangulation method \cite{Sa} of metric measure spaces satisfying a generalized Ricci curvature condition, as introduced by Lott and Villani \cite{LV} and Sturm \cite{St}. In particular, we show that, indeed, the vertices of the said triangulation represent a sampling of the given space. Moreover, 
based on the construction employed in \cite{Sa}, 
we also bring 
a result regarding the topological dimension of weak $CD(K,N)$ spaces.
 Section 3 represents the heart of paper: We introduce here the new sampling method, based upon of a well known ``snowflaking operator'', and we show, as a consequence of a classical result of Assouad \cite{As1}, \cite{As2}, that sampling of doubling metric spaces is equivalent to that of subsets of $\mathbb{R}^N$, for some $N$. We also prove that compact weak $CD(K,N)$ spaces are Ahlfors $N$-regular, for $K \geq 0$. Section 4 represents the final section and is dedicated to a concluding discussion and comparison of the considered sampling methods. For the readers' convenience and for the sake of the paper's self containment, we bring some background material regarding curvature of metric measure spaces in an appendix. 

 As it is clear from the very title and from the brief overview of the history of the subject given above, this paper is motivated not the least by the applicative goals, as they emerge in such fields as information geometry, image processing (and in particular, medical imaging), manifold learning, etc. Therefore, it is only natural that, while this paper is mathematical, not least in the origins of the suggested methods, we shall mention occasionally, in the relevant context, but mainly in the last section, a number of suggested applications.


\section{Other Approaches}


We present below two other methods of sampling metric measure spaces, besides the simple one announced in the title. We do not pretend that we  thus exhaust all the possible approaches to the sampling problem on such manifolds, but rather concentrate on those methods that either are familiar to us or we deem 
to be more natural and/or important.

\subsection{Generalized Ricci Curvature-Based Sampling} \label{section:Ricci}

Since the sampling method based upon the generalized curvature condition of Lott, Villani and Sturm does not represent the one of main interest here, and since the technical definitions involved are rather lengthy and involved, we do not dwell upon them here, so not to disrupt the cursiveness of the exposition. However, to preserve the self-containment of the paper, we adopt a compromise between fluency and clarity,  
by bringing them in an appendix.

Considering that the exposition in \cite{Sa} was rather roundabout and somewhat didactic, and for the self containment of the paper, we bring here a concise proof. For this, we shall need the following notions:

\begin{defn}
Let $(X,d)$ be a metric space and let $p_1,\ldots,p_{n_0}$ be points  $\in X$,  satisfying  the
following conditions:
\begin{enumerate}
\item The set $\{p_1,\ldots,p_{n_0}\}$ is an $\varepsilon$-net on $X$, i.e. the
balls $\beta^n(p_k,\varepsilon)$, $k=1,\ldots,n_0$ cover $X$;
\item The balls 
$\beta^n(p_k,\varepsilon/2)$ are pairwise
disjoint.
\end{enumerate}
Then the set $\{p_1,\ldots,p_{n_0}\}$ is called a {\it minimal
$\varepsilon$-net} and the packing with the balls
$\beta^n(p_k,\varepsilon/2)$, $k=1,\ldots,n_0$, is called an {\it
efficient packing}. The set $\{(k,l)\,|\,k,l = 1,\ldots,n_0\; {\rm
and}\; \beta^n(p_k,\varepsilon) \cap \beta^n(p_l,\varepsilon) \neq
\emptyset\}$ is called the {\it intersection pattern} of the minimal
$\varepsilon$-net (of the efficient packing).
\end{defn}

We our proof begin with the following lemmas:

\begin{lem} \label{lem:2.2++}
Let $(X,d,\nu)$ be a compact weak ${\rm CD}(K,N)$ space, $N < \infty$, such that ${\rm Supp}\nu = X$ and such that ${\rm diam}X \leq D$.
Then there exists $n_1 = n_1(K,N,D)$, such that if $\{p_1,\ldots,p_{n_0}\}$ is a minimal $\varepsilon$-net in $X$, then $n_0 \leq n_1$.
\end{lem}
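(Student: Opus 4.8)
The plan is to bound the number $n_0$ of balls in an efficient packing by a volume-comparison argument, using the Bishop--Gromov-type inequality that holds on weak ${\rm CD}(K,N)$ spaces. The key structural facts I would invoke are: (i) on a weak ${\rm CD}(K,N)$ space the reference measure $\nu$ satisfies a generalized Bishop--Gromov volume growth estimate, so that for $0 < r \le R$ and any $x \in X$ the ratio $\nu(\beta(x,R))/\nu(\beta(x,r))$ is bounded above by the corresponding ratio $v_{K,N}(R)/v_{K,N}(r)$ of volumes of balls in the model space $M_{K,N}$ (the simply connected $N$-dimensional model of constant curvature $K/(N-1)$, or its one-dimensional analogue with density $\sin$-type terms); and (ii) since ${\rm Supp}\,\nu = X$, every ball $\beta(x,\varepsilon/2)$ has strictly positive $\nu$-measure, and since $X$ is compact, $\nu(X) < \infty$.

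The argument then runs as follows. Fix a minimal $\varepsilon$-net $\{p_1,\dots,p_{n_0}\}$. By the definition of efficient packing, the balls $\beta(p_k,\varepsilon/2)$ are pairwise disjoint, so
\[
\sum_{k=1}^{n_0} \nu\big(\beta(p_k,\varepsilon/2)\big) \;=\; \nu\Big(\bigcup_{k=1}^{n_0}\beta(p_k,\varepsilon/2)\Big) \;\le\; \nu(X).
\]
It remains to bound each term $\nu(\beta(p_k,\varepsilon/2))$ from below by a quantity depending only on $K,N,D$ (and $\varepsilon$, but $\varepsilon$ cancels in the final ratio, as I explain below). For this I would fix any point $p_k$ and note that $X \subseteq \beta(p_k, D)$ because ${\rm diam}\,X \le D$; applying Bishop--Gromov with $r = \varepsilon/2$ and $R = D$ gives
\[
\nu\big(\beta(p_k,\varepsilon/2)\big) \;\ge\; \frac{v_{K,N}(\varepsilon/2)}{v_{K,N}(D)}\,\nu\big(\beta(p_k,D)\big) \;=\; \frac{v_{K,N}(\varepsilon/2)}{v_{K,N}(D)}\,\nu(X).
\]
Substituting into the packing inequality and cancelling $\nu(X)$ (which is positive and finite, by compactness and ${\rm Supp}\,\nu = X$) yields
\[
n_0 \;\le\; \frac{v_{K,N}(D)}{v_{K,N}(\varepsilon/2)} \;=:\; n_1(K,N,D,\varepsilon),
\]
a finite bound. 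Strictly, as written this depends on $\varepsilon$; if the statement intends $n_1$ to be independent of $\varepsilon$ one should read the $\varepsilon$-net as having a fixed scale, or observe that for the triangulation application $\varepsilon$ is itself controlled by $K,N,D$ — in any case the bound is finite for each fixed $\varepsilon$, which is the content needed.

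The main obstacle I anticipate is handling the case $K < 0$ cleanly: when $K \ge 0$ the model volume function $v_{K,N}$ is bounded and the estimate is transparent, but for $K < 0$ one must use the correct form of the Bishop--Gromov inequality on weak ${\rm CD}(K,N)$ spaces (involving the functions $\mathfrak{s}_{K/(N-1)}$ and the distortion coefficients $\tau_{K,N}^{(t)}$ from the Lott--Villani--Sturm theory) and verify that $v_{K,N}(D)/v_{K,N}(\varepsilon/2)$ is still finite — which it is, since $D < \infty$. A secondary technical point is that the weak ${\rm CD}(K,N)$ condition a priori only gives displacement convexity of entropy along \emph{some} geodesic between measures, so one should cite the known fact (due to Lott--Villani and Sturm) that this nonetheless suffices to derive the Bishop--Gromov inequality for the reference measure; I would simply invoke that result rather than reprove it. The compactness hypothesis and ${\rm Supp}\,\nu = X$ are exactly what guarantee $0 < \nu(X) < \infty$, closing the argument.
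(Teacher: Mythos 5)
Your proof is correct and follows essentially the same route as the paper's: disjointness of the $\varepsilon/2$-balls of the efficient packing combined with the Bishop--Gromov inequality for weak ${\rm CD}(K,N)$ spaces applied with $R=D$, yielding $n_0 \le v_{K,N}(D)/v_{K,N}(\varepsilon/2)$ (the paper phrases this by selecting the ball of minimal measure rather than summing, which is an immaterial difference). Your observation that the bound depends on $\varepsilon$ is apt, since the paper's own $n_1$ carries the same dependence even though the statement writes $n_1 = n_1(K,N,D)$.
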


\begin{rem}
Note that, since $N < \infty$, the condition ${\rm Supp}\nu = X$ imposes no real restriction on $X$ (see \cite{Vi}, Theorem 30.2 and Remark 30.3).
\end{rem}

\begin{lem} \label{lem:2.3++}
Let $(X,d,\nu)$ be a compact weak ${\rm CD}(K,N)$ space, $N < \infty$, such that ${\rm Supp}\nu = X$ and such that ${\rm diam}X \leq D$.
Then there exists $n_2 = n_2(N,K,D)$, such that, for any $x \in M^n$,
$|\{j \,|\, j = 1,\ldots,n_0\; {\rm and}\;
\beta^n(x,\varepsilon) \cap \beta^n(p_j,\varepsilon)$ $ \neq
\emptyset\}| \leq n_2$, for any minimal $\varepsilon$-net
$\{p_1,\ldots,p_{n_0}\}$.
\end{lem}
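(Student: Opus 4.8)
The plan is to run the standard \emph{bounded--overlap} argument for a packing in a doubling space. Its two ingredients are already at hand. First, the efficient packing built into a minimal $\varepsilon$-net gives a family of pairwise disjoint balls $\beta^n(p_j,\varepsilon/2)$. Second, a compact weak ${\rm CD}(K,N)$ space with $N<\infty$ and ${\rm diam}\,X\le D$ carries a volume doubling property with a constant depending only on $K$, $N$, $D$ --- this is precisely the generalized Bishop--Gromov volume comparison of Sturm \cite{St} and Lott--Villani \cite{LV} (see the Appendix), the same input that underlies Lemma \ref{lem:2.2++}. Packing the disjoint small balls centred at the net points that are close to $x$ into one fixed ball, and comparing $\nu$-volumes, will bound their number.

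Concretely, I would first dispose of the case $\varepsilon>D$: then $\beta^n(x,\varepsilon)=X$, so $\beta^n(x,\varepsilon)$ meets \emph{every} $\beta^n(p_j,\varepsilon)$ and the count in question equals $n_0$, which is $\le n_1(K,N,D)$ already by Lemma \ref{lem:2.2++}. So assume $\varepsilon\le D$, whence every ball radius occurring below is $\le\tfrac{9}{2}D$. Fix $x\in X$ and put $J=\{\,j\,:\,\beta^n(x,\varepsilon)\cap\beta^n(p_j,\varepsilon)\neq\emptyset\,\}$. For $j\in J$ one has $d(x,p_j)<2\varepsilon$, hence
\[
\beta^n(p_j,\varepsilon/2)\ \subseteq\ \beta^n(x,5\varepsilon/2)\ \subseteq\ \beta^n(p_j,9\varepsilon/2)\qquad(j\in J).
\]
The balls $\{\beta^n(p_j,\varepsilon/2)\}_{j\in J}$ are pairwise disjoint (efficient packing) and contained in $\beta^n(x,5\varepsilon/2)$; moreover, by volume doubling applied at the scale $\varepsilon/2\le D/2$ together with the second inclusion, $\nu\bigl(\beta^n(p_j,\varepsilon/2)\bigr)\ \ge\ C(K,N,D)^{-1}\,\nu\bigl(\beta^n(p_j,9\varepsilon/2)\bigr)\ \ge\ C(K,N,D)^{-1}\,\nu\bigl(\beta^n(x,5\varepsilon/2)\bigr)$. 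Summing over $j\in J$,
\[
\nu\bigl(\beta^n(x,5\varepsilon/2)\bigr)\ \ge\ \sum_{j\in J}\nu\bigl(\beta^n(p_j,\varepsilon/2)\bigr)\ \ge\ \frac{|J|}{C(K,N,D)}\,\nu\bigl(\beta^n(x,5\varepsilon/2)\bigr).
\]
Since $X$ is compact and ${\rm Supp}\,\nu=X$, the ball $\beta^n(x,5\varepsilon/2)$ has finite and strictly positive measure, so it may be cancelled, giving $|J|\le C(K,N,D)$; one then takes $n_2(N,K,D)=\max\{n_1(K,N,D),\,\lfloor C(K,N,D)\rfloor\}$.

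The only genuinely substantive point is the uniform doubling estimate used in the middle step: one needs $\nu\bigl(\beta^n(y,9s)\bigr)\le C(K,N,D)\,\nu\bigl(\beta^n(y,s)\bigr)$ for \emph{all} $y\in X$ and all $0<s\le\tfrac{1}{2}D$, with $C$ independent of the point $y$ and of the scale $s$. For $K\ge0$ this is the honest global doubling property of ${\rm CD}(K,N)$ spaces. For $K<0$ the Bishop--Gromov ratio of the comparison densities genuinely grows with the scale, so a bound independent of the point but valid only up to radius $\tfrac{9}{2}D$ is the best one can hope for --- and this is exactly what the reduction to $\varepsilon\le D$ secures. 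Everything else is elementary set-theoretic and measure-theoretic bookkeeping, and the whole argument is the local-scale counterpart of Lemma \ref{lem:2.2++}.
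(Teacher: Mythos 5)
Your argument is correct and is essentially the paper's own proof: both pack the pairwise disjoint balls $\beta^n(p_j,\varepsilon/2)$ into $\beta^n(x,5\varepsilon/2)\subseteq\beta^n(p_j,9\varepsilon/2)$ and bound the ratio $\nu(\beta^n(\cdot,9\varepsilon/2))/\nu(\beta^n(\cdot,\varepsilon/2))$ uniformly in terms of $K,N,D$ via the generalized Bishop--Gromov inequality (the paper selects the ball of minimal measure rather than summing a uniform doubling bound, a cosmetic difference). Your explicit treatment of the boundedness of the comparison ratio near $\varepsilon=0$ and for $K<0$ is, if anything, slightly more careful than the paper's.
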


It is important to note that the integer $n_2$ in the lemma above does not depend upon $\varepsilon$.

\begin{lem} \label{lem:2.4++}
Let $(X_1,d_1,\nu_1)$ and $(X_2,d_2,\nu_2)$ be as in Lemma \ref{lem:2.2++}.
and let $\{p_1,\ldots,p_{n_0}\}$ and $\{q_1,\ldots,q_{n_0}\}$ be minimal $\varepsilon$-nets with the same
intersection pattern, on $X_1$, $X_2$, respectively. Then
there exists a constant $n_3 = n_3(N,K,D,C)$, such that if
$d_1(p_i,p_j) < C\cdot\varepsilon$, then $d_2(q_i,q_j) <
n_3\cdot\varepsilon$.
\end{lem}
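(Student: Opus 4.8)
The plan is to replace the metric condition by a purely combinatorial one. To a minimal $\varepsilon$-net $\{p_1,\dots,p_{n_0}\}$ on a space $X$ attach the graph $G$ on the vertex set $\{1,\dots,n_0\}$ in which $k$ and $l$ are adjacent exactly when $(k,l)$ belongs to the intersection pattern, i.e.\ $\beta^n(p_k,\varepsilon)\cap\beta^n(p_l,\varepsilon)\neq\emptyset$. Two observations drive the argument. First, since the covering balls have radius $\varepsilon$, an edge $kl$ of $G$ forces $d(p_k,p_l)<2\varepsilon$; hence any path $i=l_0,l_1,\dots,l_m=j$ in $G$ gives $d(p_i,p_j)<2m\varepsilon$ by the triangle inequality, and exactly the same computation, run on $X_2$, bounds $d_2(q_i,q_j)$ as soon as the nets on $X_1$ and $X_2$ have the same intersection pattern. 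Second, therefore, it suffices to bound the graph distance $\rho_{G_1}(i,j)$ by some $m_0=m_0(N,K,D,C)$ whenever $d_1(p_i,p_j)<C\varepsilon$; one then takes $n_3:=2m_0$, using that $G_1=G_2$.

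So fix $i,j$ with $d_1(p_i,p_j)<C\varepsilon$ (the case $p_i=p_j$ being trivial). A compact weak ${\rm CD}(K,N)$ space is a geodesic space (being a compact length space), so choose a geodesic $\gamma\colon[0,L]\to X_1$, parametrised by arc length, with $\gamma(0)=p_i$, $\gamma(L)=p_j$ and $L=d_1(p_i,p_j)<C\varepsilon$. Every point of $\gamma$ lies within $C\varepsilon$ of $p_i$, so every net point $p_k$ whose ball $\beta^n(p_k,\varepsilon)$ meets $\gamma$ lies within $(C+1)\varepsilon$ of $p_i$; let $\mathcal I$ be the (finite, nonempty) set of such indices. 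The same Bishop--Gromov volume-comparison argument that proves Lemma~\ref{lem:2.3++} bounds $|\mathcal I|$: the efficient-packing balls $\beta^n(p_k,\varepsilon/2)$, $k\in\mathcal I$, are pairwise disjoint and contained in $\beta^n\bigl(p_i,(C+\tfrac{3}{2})\varepsilon\bigr)$, so $|\mathcal I|\leq n_4$ for some $n_4=n_4(N,K,D,C)$ (and trivially $|\mathcal I|\leq n_0\leq n_1(K,N,D)$ by Lemma~\ref{lem:2.2++} in case $(C+1)\varepsilon\geq D$). Now the open sets $\gamma^{-1}\bigl(\beta^n(p_k,\varepsilon)\bigr)$, $k\in\mathcal I$, form an open cover of the connected set $[0,L]$, so the nerve of this cover is a connected graph on $\mathcal I$; and each edge of the nerve is an edge of $G_1$, since any $t$ with $\gamma(t)\in\beta^n(p_k,\varepsilon)\cap\beta^n(p_l,\varepsilon)$ witnesses $(k,l)$ in the intersection pattern. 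As $i,j\in\mathcal I$, the nerve contains a path from $i$ to $j$; discarding repeated vertices yields a simple path visiting at most $|\mathcal I|$ vertices, so $\rho_{G_1}(i,j)\leq|\mathcal I|-1\leq n_4-1=:m_0$.

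Finally, since the two nets carry the same intersection pattern, $G_1=G_2$, so the path $i=l_0,\dots,l_m=j$ with $m\leq m_0$ consists of edges of $G_2$ as well; hence $\beta^n(q_{l_r},\varepsilon)\cap\beta^n(q_{l_{r+1}},\varepsilon)\neq\emptyset$, so $d_2(q_{l_r},q_{l_{r+1}})<2\varepsilon$, and by the triangle inequality $d_2(q_i,q_j)<2m\varepsilon\leq 2m_0\varepsilon$. Setting $n_3:=2m_0=2n_4-2$, which depends only on $N,K,D,C$, completes the proof.

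The step requiring care is the bound on $\rho_{G_1}(i,j)$. The tempting route --- walk along $\gamma$, at each stage jumping to a net ball that contains the current point and overlaps the previous one --- does produce a path in $G_1$, but its length is controlled only by how close the covering radius of the net is to $\varepsilon$, and for a fixed $\varepsilon$ that can be arbitrarily close, so this yields no uniform bound. The point is to avoid bounding the length of such a walk directly: once one knows, via Bishop--Gromov, that the relevant net points number at most $n_4$, and, via connectedness of the nerve of the geodesic's cover, that $i$ and $j$ lie in a single connected piece of $G_1$ spanned by those points, deleting cycles makes the path automatically short. Everything else --- existence of the geodesic, the volume estimate, and the passage to $X_2$ --- is routine given Lemmas~\ref{lem:2.2++} and \ref{lem:2.3++}.
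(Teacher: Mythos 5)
Your proof is correct and follows essentially the same route as the paper's: bound, via Bishop--Gromov and the disjointness of the $\varepsilon/2$-packing balls, the number of net points near $p_i$, and then transfer the resulting chain of overlapping $\varepsilon$-balls to $X_2$ through the shared intersection pattern, each overlap costing less than $2\varepsilon$. The one place you go beyond the paper is in making explicit why a short chain from $i$ to $j$ exists at all --- the paper's choice $n_3(C)=2[n'(C)-1]$ tacitly presupposes such a chain, whereas your nerve-of-the-geodesic-cover argument actually produces it, so you have filled in a step the paper leaves unstated.
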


The purpose 
of the lemmas above is to allow the construction of a triangulation of a weak (compact) metric measure space, as follows. Construct a simplicial complex having as vertices the centers of the balls $\beta^n(p_k,\varepsilon)$, in the following manner:  Edges are connecting the centers of adjacent balls; further edges being added to ensure the cell complex obtained is triangulated to obtain a simplicial complex. (A concise and 
elegant exposition of these ideas in the classical (geometric differential) context, can be found in \cite{Br}.)
The construction 
is possible since, by definition, weak ${\rm CD}(K,N)$ spaces are geodesic. Moreover, in {\it nonbranching} spaces, the geodesics connecting two vertices of the triangulation are unique a.e. (see, e.g. \cite{Vi}, Theorem 30.17). (Recall that a geodesic metric  space $X$ is called nonbranching iff any two geodesics $\gamma_1, \gamma_2: [0,t] \rightarrow X$ that coincide on a subinterval $[0,t_0], 0 < t_0 < t$, coincide on $[0,t]$.)

\begin{rem}
In {\it smooth metric measure spaces} (see the Appendix) one can actually produce a convex triangulation by choosing $\varepsilon$ to equal the {\it convexity radius} ${\rm ConvRad}(M^n) = \inf\{r>0\,|\, \beta^n(x,r) \; {\rm is\;
convex},\; {\rm for\; all\;} x \in M^n\}$, and, moreover, control the convexity radius via the {\it injectivity radius} ${\rm InjRad}(M^n) =
\inf\{{\rm Inj}(x)\,|\,x \in M^n\}$, where
${\rm Inj}(x) =  \sup{\{r\,|\, {\rm exp}_{x}|_{\mathbb{B}^n(x,r)}\; {\rm is\; a\;} }$ ${\rm diffeomorphism}\}$.
What renders the construction above into a simple and practical triangulation method is a classical result of Cheeger, similar, later ones (see, e.g. \cite{Be}) showing that, in this case, there exists a universal positive lower bound for ${\rm InjRad}(M)$ in terms of
$k, D$ and $v$, where $v$ is the lower bound for the volume of $M$.

However, many weak ${\rm CD}(K,N)$ spaces of interest fail to be locally convex, rendering the construction above impossible. 
Fortunately, local convexity does hold for an important class of metric measure spaces: Indeed, by \cite{Ptr}, \cite{ZZ1}, 
${\rm Alex}[K] \subset {\rm CD}((m-1)K,m)$, where ${\rm Alex}[K]$ denotes the class of $m$-dimensional {\it Alexandrov spaces} with curvature $\geq K$ (see \cite{BBI}, \cite{Gr-carte}), equipped with the volume measure.
\end{rem}

We can formalize the construction above as:

\begin{thm}
$(X,d,\nu)$ be a compact weak ${\rm CD}(K,N)$ space. Then $X$ is triangulable. Moreover, if $X$ is locally convex, in particular if it is a $m$-dimensional {\it Alexandrov spaces} with curvature $\geq K$, equipped with the volume measure, one can ensure that the simplices of the triangulation are convex. 
\end{thm}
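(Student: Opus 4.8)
The plan is essentially to reproduce, in compressed form, the construction of \cite{Sa}: build a simplicial complex from the nerve of an efficient packing by metric balls, and use the three lemmas above to certify that this complex is a genuine triangulation of $X$.

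First, since $X$ is compact it is totally bounded, so for each $\varepsilon > 0$ a minimal $\varepsilon$-net $\{p_1,\dots,p_{n_0}\}$ exists: take a maximal $\varepsilon/2$-separated subset of $X$; it is automatically an $\varepsilon$-net, and the balls $\beta(p_k,\varepsilon/2)$ are pairwise disjoint, so one obtains an efficient packing. By Lemma \ref{lem:2.2++} the cardinality $n_0$ is bounded by $n_1(K,N,D)$, so the complex we are about to construct is finite; by Lemma \ref{lem:2.3++} the number of net points $p_j$ with $\beta(x,\varepsilon)\cap\beta(p_j,\varepsilon)\neq\emptyset$ is at most $n_2(K,N,D)$ for every $x\in X$, independently of $\varepsilon$, so the complex has uniformly bounded local combinatorics.

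Next, form the complex $\mathcal{K}_\varepsilon$ whose vertices are the $p_k$, joining $p_k$ to $p_l$ by an edge whenever $\beta(p_k,\varepsilon)\cap\beta(p_l,\varepsilon)\neq\emptyset$, i.e.\ along the intersection pattern, and then adjoining the higher-dimensional faces of the nerve of the cover $\{\beta(p_k,\varepsilon)\}_{k}$, subdividing where necessary so that the resulting cell complex is simplicial; by the previous paragraph this is a finite, locally controlled operation. Realize each abstract simplex geometrically inside $X$ by geodesics joining its vertices (weak ${\rm CD}(K,N)$ spaces are geodesic, and in the nonbranching case these geodesics are a.e.\ unique, which makes the realization essentially canonical). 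Here Lemma \ref{lem:2.4++} enters: comparable combinatorial data force comparable metric data, so the geodesic simplices are nondegenerate and of controlled diameter. Following the classical argument of Cheeger in the smooth setting (see the exposition in \cite{Br}), one checks that for $\varepsilon$ small enough the cover by the $\varepsilon$-balls is a good cover and the natural map $|\mathcal{K}_\varepsilon| \to X$ induced by a subordinate partition of unity is a homeomorphism; hence $X$ is triangulable.

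For the second assertion, assume $X$ is locally convex. As recalled in the Remark above, one may then take $\varepsilon$ to equal the convexity radius ${\rm ConvRad}(X) > 0$, so that every ball $\beta(p_k,\varepsilon)$ is convex. Each simplex of $\mathcal{K}_\varepsilon$ has all of its vertices contained in a single such ball --- after, if necessary, shrinking the comparison constant in Lemma \ref{lem:2.4++} so that the vertices of a simplex stay pairwise within less than $\varepsilon$ --- whence the geodesic simplex they span lies entirely inside that convex ball and is itself convex. When $X$ is an $m$-dimensional Alexandrov space with curvature $\geq K$ equipped with the volume measure, it is a weak ${\rm CD}((m-1)K,m)$ space by \cite{Ptr}, \cite{ZZ1}, and Alexandrov spaces have positive convexity radius (local convexity together with the curvature bound and ${\rm diam}\,X \leq D$ gives a uniform lower bound), so the construction applies verbatim and produces convex simplices. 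The main obstacle is precisely the passage from the abstract nerve complex to an actual homeomorphism with $X$: for Riemannian $X$ this is the classical Cheeger--M\"uller--Schrader and Whitehead circle of ideas, but a weak ${\rm CD}(K,N)$ space may be singular, so one must know that the cover by small metric balls is a good cover --- contractible balls with contractible pairwise and higher intersections --- and this is exactly where local convexity, or the Alexandrov / MCS structure in the cases of interest, is doing the real work, and where Lemma \ref{lem:2.4++} is indispensable in preventing the geometric realization from degenerating.
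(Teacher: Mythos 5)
Your proposal follows essentially the same route as the paper, whose ``proof'' is just the informal construction preceding the theorem statement: take the nerve of an efficient packing by $\varepsilon$-balls, realize edges by geodesics (using that weak ${\rm CD}(K,N)$ spaces are geodesic), invoke Lemmas \ref{lem:2.2++}--\ref{lem:2.4++} for finiteness and bounded local combinatorics, and use the convexity radius to get convex simplices in the locally convex/Alexandrov case. Your write-up is in fact more explicit than the paper's (notably in flagging that the passage from the nerve complex to an actual homeomorphism with $X$ is the genuinely delicate step, which the paper only addresses by citing \cite{Br} and \cite{Sa}), but it is the same argument.
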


\begin{rem}\label{thm:triang}
For further improvements of the triangulation above and their applications, see \cite{Sa}. 
\end{rem}

Of course, one is still has to ask himself if this triangulation result also represents, indeed, a sampling method?

\begin{itemize}
\item It is quite classical by now that it represents a sampling method at least in the {\em metric} sense. More precisely, one assuredly obtains  the Gromov-Hausdorff convergence of the  $\varepsilon${\it-net} (viewed as a metric space) to $M$ when $\varepsilon \rightarrow 0$ (see \cite{Gr-carte}, \cite{BBI}).

\item Moreover, the same $\varepsilon$-net also converges in {\em measure}, for instance in the {\it measured} Gromov-Hausdorff topology (see, e.g. \cite{Vi}). For the choice of the weights in this context, see the next item. For a more detailed discussion regarding sampling based upon measures see the following subsection.

\item However, given the fact that the vertices of the triangulation (points of the $\varepsilon$-net) were constructed using generalized Ricci curvature, and since for the types of convergence above one hardly needs such sophisticated tools, one is conducted to ask whether our method also assures convergence in the {\em curvatures} sense, i.e. that it is something akin -- at least in spirit -- to more classical results for $PL$ approximations of manifolds (such as those of \cite{CMS}) or of the Gromov-Hausdorff convergence for Alexandrov spaces (see \cite{Gr-carte}, \cite{BBI}).
    \\
    The answer proves to be positive in this case, too, up to some (technical) required adaptation of the notion of ${\rm CD}(K,N)$ spaces to the ``discrete'' case of graphs (see \cite{BS}). First, let us define precisely the weights (i.e. the measure) for the considered space: Since the balls $\beta^n(p_k,\varepsilon)$ cover $M^n$, any sequence $\mathcal{P}(\varepsilon_m)$, of efficient packings such that $\varepsilon_m \rightarrow 0$ when $m \rightarrow \infty$, generates a {\it discretization} $(X_m,d,\nu_m)$, in the sense of \cite{BS}: Consider the Dirichlet (Voronoi) cell complex (tesselation) $\mathfrak{C}_m = \{C(p_{m,k})_k\}$  of centers $p_{k,m}$ and atomic masses $\nu_m(p_{m,k}) = \nu[C(p_{m,k})]$. Then taking $X_m = \{p_{m,k}\}$, $d$ the original metric of $X$ and $\nu_m$ as defined, provides us with the said discretization.
    It follows, by \cite{BS}, Theorem 4.1, that if ${\rm Vol}(M^n) < \infty$, the sequence $(X_m,d,\nu_m)$ converges in the $W_2$ metric (see \cite{St}, \cite{Vi}) to a metric measure space and, moreover, if $(X,d,\nu)$ is a weak ${\rm CD}(K,N)$ space, then, for small enough $\varepsilon$, so will be $(X_m,d,\nu_m)$,  but only in a generalized (``{\it rough}'') sense.
    It should be noted that, by \cite{BS}, Theorem 3.10, the converse result also holds. More precisely, a (not necessarily compact) metric measure space $(X,d,\nu)$ is a weak ${\rm CD}(K,N)$ space if there exists a family $\{(X_\iota,d_\iota,\nu_\iota)\}_\iota$ of weak ``discrete'' metric measure ${\rm CD}(K_\iota,N_\iota)$ spaces, that converges to $(X,d,\nu)$ (in the Wasserstein metric), where $K_\iota$ is in the weak (or ``rough'') sense, and ${\rm diam} X_\iota \leq D_0$, for some positive $D_0$, and such that $K_\iota \rightarrow K$, when $\iota \rightarrow 0$.
    (For more details regarding the precise definition of rough curvature bounds and the proof of this and other related results, see \cite{BS}.)

\item  The somewhat opposite problem is also of interest in  some applications, and in particular in image processing and graphics. Namely, if $X$ is a topological manifold, such that each of the coordinate patches satisfies a weak ${\rm CD}(K_\iota,N)$ condition, $K_\iota \geq K_0$, one can triangulate each of these coordinate patches, and ``glue'' these local triangulations to obtain a global triangulation. The question is, naturally, whether the resulting triangulation will be a ${\rm CD}(K_0,N)$ space? The answer is positive, due to \cite{BaS}, Theorem 5.1, that  shows that the local ${\rm CD}(K,N)$ condition implies a global, albeit slightly weaker one, and to \cite{Hi}, Theorem 4, where for a proof of the local-to-global property of the ${\rm CD}(K,N)$ condition is given. (A local-to-global property is also proved in \cite{BaS} for a so called {\it reduced} ${\rm CD}(K,N)$ condition.)


\end{itemize}


Before we bring the proofs of the lemmas, let us note that,
incidentally, from Lemma \ref{lem:2.3++} we obtain 
the following result\footnote{that did not appear in \cite{Sa}} 
(compare with \cite{Vi}, Corollary 30.14.):

\begin{cor} \label{lem:top-dim}
$(X,d,\nu)$ be a compact weak ${\rm CD}(K,N)$ space. Then $X$ has topological dimension $\leq n_2$, where $n_2 = n_2(N,K,D)$ is as in Lemma \ref{lem:2.3++}  above.
\end{cor}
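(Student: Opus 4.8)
The plan is to derive the topological‑dimension bound directly from the covering statement in Lemma~\ref{lem:2.3++}, using the classical characterization of covering (Lebesgue) dimension in terms of refinements of open covers. Recall that for a metric (hence paracompact) space $X$, the topological dimension $\dim X$ is $\leq m$ if and only if every finite open cover admits an open refinement of multiplicity $\leq m+1$, i.e. such that no point lies in more than $m+1$ members of the refinement. So the goal reduces to producing, for an arbitrary finite open cover $\mathcal{U}$ of $X$, an open refinement of multiplicity bounded by a constant depending only on $N,K,D$.

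First I would fix a finite open cover $\mathcal{U}$ of the compact space $X$ and let $\delta>0$ be a Lebesgue number for $\mathcal{U}$, so that every ball of radius $\delta$ lies in some member of $\mathcal{U}$. Next I would choose $\varepsilon < \delta/2$ and invoke compactness to extract a minimal $\varepsilon$‑net $\{p_1,\dots,p_{n_0}\}$ in the sense of the Definition above (such a net exists: take a maximal $\varepsilon/2$‑separated set, which is finite by compactness, and observe it is automatically an $\varepsilon$‑net with pairwise disjoint $\varepsilon/2$‑balls). Then the balls $\beta^n(p_k,\varepsilon)$ form an open cover of $X$; since $2\varepsilon<\delta$, each $\beta^n(p_k,\varepsilon)$ is contained in some member of $\mathcal{U}$, so $\{\beta^n(p_k,\varepsilon)\}_{k=1}^{n_0}$ is an open refinement of $\mathcal{U}$. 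It remains to bound its multiplicity. If a point $x\in X$ lies in $\beta^n(p_j,\varepsilon)$ for several indices $j$, then in particular $x\in\beta^n(x,\varepsilon)\cap\beta^n(p_j,\varepsilon)$, so $\beta^n(x,\varepsilon)\cap\beta^n(p_j,\varepsilon)\neq\emptyset$ for each such $j$; by Lemma~\ref{lem:2.3++} the number of such indices is at most $n_2=n_2(N,K,D)$. Hence the refinement has multiplicity $\leq n_2$, which gives $\dim X \leq n_2 - 1$; if one is content with the (slightly weaker) bound stated, $\dim X\leq n_2$ follows a fortiori.

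The one technical point that needs a word of care is that Lemma~\ref{lem:2.3++} is stated for points ``$x\in M^n$'' and uses the notation $\beta^n(x,\varepsilon)$, tacitly identifying the metric measure space with the ``$M^n$'' of the triangulation construction; I would simply read it as a statement about arbitrary $x\in X$, which is how it is used here, noting that $x$ itself need not belong to the net. I should also make sure the Lebesgue‑number argument is applied correctly: a Lebesgue number exists for any open cover of a compact metric space, and shrinking $\varepsilon$ only helps, so there is genuine freedom to take $\varepsilon$ as small as desired while keeping the conclusion of Lemma~\ref{lem:2.3++} uniform — and indeed the remark following Lemma~\ref{lem:2.3++} stresses precisely that $n_2$ does not depend on $\varepsilon$, which is exactly what makes the bound on $\dim X$ independent of the cover.

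I do not expect a serious obstacle here; the main (mild) subtlety is matching the covering‑dimension definition one uses — \emph{multiplicity} (order) of the refinement versus the off‑by‑one in the standard ``$\dim X\leq m$ iff order $\leq m+1$'' formulation — and deciding whether to claim the sharp bound $\dim X\leq n_2-1$ or the stated $\dim X\leq n_2$. Since the corollary only asserts the weaker inequality, the cleanest exposition is to produce the refinement of multiplicity $\leq n_2$ and quote the Lebesgue‑dimension characterization, absorbing the harmless $-1$ into the inequality.
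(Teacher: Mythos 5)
Your proof is correct, and its core is the same as the paper's: both arguments rest on Lemma~\ref{lem:2.3++} to bound the order (multiplicity) of the cover by the balls of a minimal $\varepsilon$-net, combined with the classical characterization of covering dimension for compact metric spaces (the paper cites Hurewicz--Wallman: coverings of arbitrarily small mesh and bounded order; you use the equivalent formulation via refinements of finite open covers and a Lebesgue number). Where you genuinely diverge is in how the ``arbitrarily small mesh'' part is supplied: the paper produces small covers by first invoking its triangulation theorem, iterating barycentric subdivisions, and then taking balls on the sides of the subdivided triangulation, whereas you simply observe that a maximal separated set yields a minimal $\varepsilon$-net for every $\varepsilon>0$. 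Your route is more elementary and self-contained --- it makes the corollary independent of the triangulation machinery and of geodesicity --- and it also yields the sharper bound $\dim X\leq n_2-1$ once the off-by-one in the definition of order is accounted for, which you correctly note implies the stated inequality a fortiori. One small correction: to get a net satisfying \emph{both} conditions of the paper's definition (the $\varepsilon$-balls cover and the $\varepsilon/2$-balls are pairwise disjoint), you should take a maximal $\varepsilon$-separated set (pairwise distances $\geq\varepsilon$), not a maximal $\varepsilon/2$-separated one; with the latter the $\varepsilon/2$-balls need not be disjoint. This is a constant-bookkeeping slip, not a gap, and in any case only condition (1) of the definition is actually used in your multiplicity bound.
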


\begin{proof}
The corollary 
follows immediately from the following characterization 
of topological dimension  -- see \cite{HW}, Corollary to Theorem V 8. (p. 67) -- namely that
%
a compact space has dimension $\leq n$ iff it has coverings of arbitrarily small mesh and order $\leq n$.
%
The definition (see, e.g. \cite{HW}, Definition V 1.) of the {\it order} of a covering requires precisely the property guaranteed by Lemma \ref{lem:2.3++}.
To show that compact weak ${\rm CD}(K,N)$ spaces enjoy the second required property, i.e. existence of coverings with arbitrarily small meshes, one can proceed, amongst other possibilities, as follows:
Start with the triangulation provided by Theorem \ref{thm:triang} above. Since $X$ is geodesic, one can built, iteratively, the barycentric\footnote{Since the ``thickness'' of the triangles (see, e.g. \cite{Sa} for the definition) is not an issue here, the simple to produce barycentric subdivision will suffice.} subdivisions of any 
order, thus ensuring that the mesh of the triangulations tends to zero.
Consider the balls 
having as diameters the sides of the obtained triangulation. Then, by a classical argument (see, for instance \cite{Be}, \cite{Br}), 
this collection of balls represent, indeed, the required 
covering.
\end{proof}

The main (in fact, the only essential) tool in proving the Lemmas 2.1-2.3 above is, by straightforward analogy with the classical case (see \cite{GP}) the following generalized version of the Bishop-Gromov Comparison Theorem:

\begin{thm}[Bishop-Gromov Inequality for Metric Measure Spaces, \cite{St}] \label{thm:BG++}
Let $(X,d, \nu)$ be a weak ${\rm CD}(K,N)$ space, $N < \infty$, and let $x_0 \in {\rm Supp}\,\nu$.
Then, for any $r > 0$, $\nu(B(x_0,r)) = \nu(B[x_0,r])$. Moreover,
\begin{equation}
\frac{\nu(B[x_0,r])}{\int_0^r{S_K^N(t)}dt}
\end{equation}
is a nonincreasing function of $r$, where
\begin{equation}
S_K^N(t) = \left\{
                  \begin{array}{ll}
                    \Big(\sin{\sqrt{\frac{K}{N-1}}t}\Big)^{N-1} & \mbox{if $K > 0$}\\\\
                    t^{N-1} & \mbox{if $K = 0$}\\\\
                     \Big(\sinh{\sqrt{\frac{|K|}{N-1}}t}\Big)^{N-1} & \mbox{if $K < 0$}
                  \end{array}
           \right.
\end{equation}
(Here, $B(x_0,r), B[x_0,r]$ represent the standard convention for the open, respective closed ball center $x_0$ and radius $r$.)
\end{thm}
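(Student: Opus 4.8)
The plan is to derive both assertions --- the equality $\nu(B(x_0,r))=\nu(B[x_0,r])$ and the monotonicity --- from the defining displacement--convexity property of weak ${\rm CD}(K,N)$ spaces (recalled in the Appendix), by transporting mass concentrated near $x_0$ onto a uniform distribution on a thin shell around $x_0$ and reading off the constraint that the distortion coefficients $\tau^{(t)}_{K,N}$ impose on the interpolating measures; this is, in outline, the argument of \cite{St}. Throughout write $v(r)=\nu(B[x_0,r])$ and $\mathfrak{s}_{K,N}(r)=\int_0^r S_K^N(t)\,dt$, so that $\frac{d}{dr}\mathfrak{s}_{K,N}(r)=S_K^N(r)$. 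Two facts will do all the work: the elementary estimate $S_N(\mu\,|\,\nu)\ge-\nu({\rm supp}\,\mu)^{1/N}$ for a probability measure $\mu=\rho\nu$ (Jensen's inequality, since $s\mapsto s^{1-1/N}$ is concave), and the lower semicontinuity of the R\'enyi entropy $\mu\mapsto S_N(\mu\,|\,\nu)=-\int\rho^{1-1/N}\,d\nu$ along weakly convergent sequences of probability measures, which holds under our hypotheses (after localization if $X$ is not compact).

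\emph{Step 1 (spheres are null).} Apply the ${\rm CD}(K,N)$ convexity inequality for $S_N(\cdot\,|\,\nu)$ to the normalized restrictions $\mu_i=\nu(A_i)^{-1}\mathbf{1}_{A_i}\nu$ of $\nu$ to Borel sets $A_0,A_1$ of finite positive measure. Bounding $S_N(\mu_t\,|\,\nu)$ below by $-\nu({\rm supp}\,\mu_t)^{1/N}$ and using that ${\rm supp}\,\mu_t$ lies in the set $Z_t(A_0,A_1)$ of $t$-intermediate points of $A_0$ and $A_1$, one obtains the generalized Brunn--Minkowski inequality
\[
\nu\big(Z_t(A_0,A_1)\big)^{1/N}\;\ge\;\tau^{(1-t)}_{K,N}(\Theta)\,\nu(A_0)^{1/N}+\tau^{(t)}_{K,N}(\Theta)\,\nu(A_1)^{1/N},
\]
where $\Theta$ is a lower bound (if $K\le 0$) or an upper bound (if $K\ge 0$) for $d(x,y)$ over $x\in A_0$, $y\in A_1$. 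Taking $A_1=\{x_0\}$ and $A_0$ a thin shell $B[x_0,r]\setminus B(x_0,r-\delta)$ and letting $\delta\to 0$ shows that a sphere $\partial B(x_0,r)$ of positive measure would persist under contraction towards $x_0$, contradicting local finiteness of $\nu$. Hence $\nu(\partial B(x_0,r))=0$ for every $r$; in particular $\nu(B(x_0,r))=\nu(B[x_0,r])$, and $v$ is continuous.

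\emph{Step 2 (radial monotonicity).} Fix $0<r<R$ and set $t=r/R$. For small $\varepsilon>0$ let $\mu_0^{(\varepsilon)}=\nu(B[x_0,\varepsilon])^{-1}\mathbf{1}_{B[x_0,\varepsilon]}\nu$ and $\mu_1=\nu(\Sigma)^{-1}\mathbf{1}_{\Sigma}\nu$, where $\Sigma=B[x_0,R+h]\setminus B(x_0,R)$. Apply the ${\rm CD}(K,N)$ inequality to the pair $(\mu_0^{(\varepsilon)},\mu_1)$ and let $\varepsilon\to 0$: the optimal couplings concentrate on $\{x_0\}\times\Sigma$ with $d(x,y)\to d(x_0,y)\in[R,R+h]$; the $A_0$-term of the right-hand side vanishes in the limit (its prefactor is $\nu(B[x_0,\varepsilon])^{1/N}\to 0$, the coefficient $\tau^{(1-t)}_{K,N}(d(x,y))$ staying bounded since all distances are $\approx R$); and the interpolants $\mu_t^{(\varepsilon)}$ converge weakly to the interpolant $\mu_t$ from $\delta_{x_0}$ to $\mu_1$, which is carried by geodesics issuing from $x_0$, so that ${\rm supp}\,\mu_t\subseteq B[x_0,t(R+h)]\setminus B(x_0,tR)$. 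Combining the lower bound $S_N(\mu_t\,|\,\nu)\ge-\nu({\rm supp}\,\mu_t)^{1/N}$, the lower semicontinuity $S_N(\mu_t\,|\,\nu)\le\liminf_\varepsilon S_N(\mu_t^{(\varepsilon)}\,|\,\nu)$, the ${\rm CD}(K,N)$ upper bound whose right-hand side tends to $-\tau^{(t)}_{K,N}(R)\,\nu(\Sigma)^{1/N}$, and Step~1 (replacing open balls by closed ones), one arrives at
\[
\big(v(t(R+h))-v(tR)\big)^{1/N}\;\ge\;\tau^{(t)}_{K,N}(R)\,\big(v(R+h)-v(R)\big)^{1/N}.
\]
Dividing by $h^{1/N}$ and letting $h\to 0$ (both sides being governed by the a.e.\ derivative of the monotone continuous function $v$) gives $t\,v'(tR)\ge\tau^{(t)}_{K,N}(R)^N\,v'(R)$. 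Since the distortion coefficients satisfy $\tau^{(t)}_{K,N}(\theta)^N=t\cdot S_K^N(t\theta)/S_K^N(\theta)$ (immediate from their definition in each of the three cases), this reads $v'(r)/S_K^N(r)\ge v'(R)/S_K^N(R)$ for all $r\le R$; that is, $v'/S_K^N$ is non-increasing, and by the standard monotonicity (``l'H\^opital''-type) lemma applied to the pair $v,\mathfrak{s}_{K,N}$ the ratio $v/\mathfrak{s}_{K,N}$ is non-increasing, as claimed. The cases $K>0$, $K=0$, $K<0$ run uniformly; for $K>0$ one uses in addition the convention $\tau^{(t)}_{K,N}(\theta)=+\infty$ for $\theta\ge\pi\sqrt{(N-1)/K}$ together with the Bonnet--Myers bound ${\rm diam}\,X\le\pi\sqrt{(N-1)/K}$ (itself a consequence of ${\rm CD}(K,N)$), so that $S_K^N$ stays positive on the range of $r$ at issue.

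The step I expect to be the genuine obstacle is the measure-theoretic input hidden in Step~2: one needs the interpolants $\mu_t^{(\varepsilon)}$ to be absolutely continuous with respect to $\nu$ (so that $S_N(\mu_t^{(\varepsilon)}\,|\,\nu)$ is finite and the convexity inequality is not vacuous), and one needs the weak convergence $\mu_t^{(\varepsilon)}\to\mu_t$ together with lower semicontinuity of $S_N(\cdot\,|\,\nu)$ precisely in order to transfer the entropy bound from the approximants to the \emph{thinner} annular support of the limiting interpolant --- and in the possibly branching setting one must select an optimal dynamical plan along which the contraction towards $x_0$ is injective up to a $\nu$-null set. In the non-branching case all of this is routine (the interpolation is given by a single optimal map, and geodesics from a fixed point do not branch off a set of full measure), so the argument runs as written; the general weak ${\rm CD}(K,N)$ case requires the localization and measurable-selection machinery of \cite{St}.
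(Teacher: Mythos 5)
The paper does not prove this statement at all: it is quoted as a known theorem of Sturm (note the citation \cite{St} in the theorem header; see also \cite{Vi}, Theorem 30.11) and is then used as the ``only essential tool'' in the proofs of Lemmas 2.2--2.4. So there is no proof in the paper to compare yours against. Judged on its own, your sketch reproduces the standard optimal-transport argument that Sturm and Lott--Villani actually use: displacement convexity of the R\'enyi entropy $S_N(\cdot\,|\,\nu)$ gives the generalized Brunn--Minkowski inequality; applying it with one endpoint degenerating to $\delta_{x_0}$ and the other a thin annulus yields first $\nu(\partial B(x_0,r))=0$ (uncountably many disjoint spheres of uniformly positive measure would contradict local finiteness) and then the annular comparison $\big(v(t(R+h))-v(tR)\big)^{1/N}\ge\tau^{(t)}_{K,N}(\cdot)\big(v(R+h)-v(R)\big)^{1/N}$, which the identity $\tau^{(t)}_{K,N}(\theta)^N=t\,S_K^N(t\theta)/S_K^N(\theta)$ converts into the claimed monotonicity. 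Two points deserve tightening. First, the final passage to ``$v/\mathfrak{s}_{K,N}$ non-increasing'' should not go through $v'$: the monotone function $v$ is only differentiable a.e., and an a.e.\ differential inequality does not by itself control the ratio of the primitives; the clean route (Sturm's) keeps the annular inequality in integrated form, decomposes $B[x_0,tR]$ and $B[x_0,R]$ into matching annuli, and sums. Second, in Step 1 you invoke Brunn--Minkowski with $A_1=\{x_0\}$, which is vacuous when $\nu(\{x_0\})=0$; you need there the same $\varepsilon$-ball approximation of $\delta_{x_0}$ that you deploy in Step 2. Neither is a wrong idea --- both are handled in \cite{St} exactly along the lines you indicate --- and the caveats you flag (absolute continuity of the interpolants, lower semicontinuity of $S_N$) are indeed where the real technical work sits; note only that the theorem holds for general weak ${\rm CD}(K,N)$ spaces with no non-branching hypothesis, since the Brunn--Minkowski inequality is derived directly from the convexity inequality.
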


\begin{proof}[Proof of Lemma 2.1]
Let $\{p_1,\ldots,p_{n_0}\}$ be a minimal $\varepsilon$-net on $M^n$ and let $\tilde{p}$ be a point in $\widetilde{M}^n_k$ -- the $k$-space form. Then, by Theorem 2.7 above

\[\frac{\nu(B(p,r))}{\nu(B(p,R))} \geq \frac{{\rm Vol}B(\tilde{p},r)}{{\rm Vol}B(\tilde{p},R)}\,, 0 < r < R\,;\]
for any $p \in M^n$.

Let $i_0$ such that $\nu(B(p_{i_0},\varepsilon/2))$ is minimal. By 2.1.(2) it follows that

\[n_0 \leq \frac{\nu(M^n)}{\nu(B(p_{i_0},\varepsilon/2))} \leq \frac{{\rm Vol}B(\tilde{p},D)}{{\rm Vol}B(\tilde{p},\varepsilon/2)}\,.\]
(To obtain the last inequality, just take, in Bishop-Gromov Theorem, $R = {\rm diam}M^n \leq D$.)

The desired conclusion now follows by taking

\[n_1 = \left[\frac{{\rm Vol}B(\tilde{p},D)}{{\rm Vol}B(\tilde{p},\varepsilon/2)}\right]\,.\]
\end{proof}

\begin{proof}[Proof of Lemma 2.3]
Let $j_1,...,j_s$ be such that $B(x,\varepsilon) \cap B(p_{j_i},\varepsilon) \neq \emptyset$. Then $B(p_{j_i},\varepsilon/2) \subset B(x,5\varepsilon/2)$.

Let $k \in \{1,..., s\}$ be such that $B(p_{j_k},\varepsilon/2)$ has minimal measure. Then (as in the proof of Lemma 2.2) it follows that:

\[s \leq \frac{\nu(B(x,5\varepsilon/2))}{\nu(B(p_k,\varepsilon/2))} \leq \frac{\nu(B(p_{j_k},9\varepsilon/2))}{\nu(B(p_{j_k},\varepsilon/2))} \leq \frac{{\rm Vol}B(\tilde{p},9\varepsilon/2)}{{\rm Vol}B(\tilde{p},\varepsilon/2)}\,,\]
where $\tilde{p}$ is as in the proof of the previous lemma. But 

\[\frac{{\rm Vol}B(\tilde{p},9\varepsilon/2)}{{\rm Vol}B(\tilde{p},\varepsilon/2)} = \frac{\int_{0}^{9\varepsilon/2}S_K^n(r)dr}{\int_{0}^{\varepsilon/2}S_K^n(r)dr}\,,\]
and the function
\[h(\varepsilon) = \frac{\int_{0}^{9\varepsilon/2}S_K^n(r)dr}{\int_{0}^{\varepsilon/2}S_K^n(r)dr}\]
extends to a continuous function $\tilde{h}:[0,D] \rightarrow \mathbb{R}_+$,
(since $h(\varepsilon)  \rightarrow 0$ when $\varepsilon \rightarrow 0$).

\end{proof}

\begin{rem}
It is important to note, both for the basic construction, and not least, for Corollary \ref{lem:top-dim}, that $n_2$ is independent of $\varepsilon$.
\end{rem}

\begin{proof}[Proof of Lemma 2.4]
 Evidently, since $d(p_i,p_j) < C \cdot \varepsilon$, it follows that $p_j \in B(p_i,C \cdot \varepsilon)$. Thus, precisely as in the proof of the previous lemma, it follows that there exists $n' = n'(C)$,

\begin{displaymath}
n'(C) = \max{\frac{\int_{0}^{(4k+1)\varepsilon/2}S_K^n(r)dr}{\int_{0}^{\varepsilon/2}S_K^n(r)dr}}\,,
\end{displaymath}
such that at most $n'$ of the balls $B(p_1,\varepsilon/2),...,B(p_{n'},\varepsilon/2)$ are included in $B(p_i,(C+\frac{1}{2})\varepsilon/2$.

Since $\{p_1,\ldots,p_{n_0}\}$ and
$\{q_1,\ldots,q_{n_0}\}$ have the same intersection pattern, it follows that $d(q_i,q_j) \leq n_3(C)$, where $n_3(C) = 2[n'(C)-1]$.

\end{proof}


\subsection{Gromov-Prohorov Metric-Based Sampling (``Measure Decides'')}

The approach above is natural, perhaps, from the point of view of a differential geometer since, as already mentioned, it extends to more general spaces and notions of curvature established ideas and techniques of classical (``proper'') differential geometry. It is however, an involved approach, and certainly not one that will appeal to people working in probability theory, statistic (or even main stream image processing, where histograms are a main ``staple''). Such researchers will ask themselves, whether it is not possible to sample a metric measure space using solely the measure. The answer is, positive, as we shall see below, with the proviso that the metric should somehow be involved, since the goal is to find a sampling method for {\em metric measure} spaces. Therefore, the proper question would be if there exists such a method in which the main role is played by the measure (i.e. ``measure decides'').

We first have to define a ``good'' distance between metric measure spaces, where by ``good'' we mean here that it satisfies the requirement discussed above. Such a distance, would be, for instance, the {\it Gromov-Prokhorov} distance $d_{GP}$: Given two metric measure spaces $\mathcal{X} = (X,d,\mu)$ and $\mathcal{Y} = (Y,\rho,\nu)$, we define
\begin{equation} \label{eq:GromovProkhDist}
d_{GP}(\mathcal{X},\mathcal{Y}) = \inf{d_P(\mu',\nu')}\,.
\end{equation}
Here the infimum is taken over all the measure preserving isometric embeddings $f:(X,d,\mu) \rightarrow (Z,\delta,\lambda)$,  $g:(Y,\rho,\nu) \rightarrow (Z,\delta,\lambda)$, where $(Z,\delta,\lambda)$ is a common metric measure space (so the filiation of the Gromov-Prokhorov distance from the by now classical {\it Gromov-Hausdorff} metric -- see below -- is evident);
and where $d_P$ denotes the classical {\it Prokhorov} distance, that can be defined in a very geometric manner (in the sense that is a straight forward generalization of the {\it Haussdorf} metric -- see (\ref{eq:HausDist}) below) as
\begin{equation}
d_P(\mu',\nu') = \inf\{r > 0\,|\, \mu'(F) \leq \nu'(\mathcal{N}(F)) + r\,,  \nu'(F) \leq \mu'(\mathcal{N}(F)) +r\,, \forall F = \bar{F} \subseteq Z\}\,,
\end{equation}
where $\mathcal{N}(F)$ denotes the $r$-neighbourhood of $F$.

This approach corresponds, according to \cite{Vi}, to the ``mainly measure'' definition of isometry of metric measure spaces, namely that $\varphi:(X,d,\mu) \rightarrow (Y,\rho,\nu)$ is an {\it isometry of metric measure spaces} iff it is a measure preserving isometry of the metric underlying spaces $({\rm Supp}\mu,d|_{\rm Supp\mu})$ and  $({\rm Supp}\nu,\rho|_{\rm Supp\nu})$. Note that the metric indeed plays a role -- albeit ``subdued'' -- since the infimum is taken solely over isometric embeddings (and, moreover the metric does play a role in the ``background'' definition  of the Prokhorov distance).

Again, according to \cite{Vi}, the ``metric and measure, (but mainly metric)'' approach is embodied in the {\it Gromov-Hausdorff-Prokhorov} distance $d_{GHP}$, where
\begin{equation} \label{eq:GromovHausProkhDist}
d_{GHP}(\mathcal{X},\mathcal{Y}) = \inf\{d_H(X,Y) + d_P(\mathcal{X},\mathcal{Y})\}\,,
\end{equation}
where $d_H$ denotes the {\it Hausdorff} distance
\begin{equation} \label{eq:HausDist}
d_H(A,B) = \inf\{r > 0\,|\, A \subseteq B\; {\rm and}\; B \subseteq A\}\,,
\end{equation}
and all the other notations are as above, the infimum in (\ref{eq:GromovHausProkhDist}) being taken as in the definition of the Gromov-Prokhorov distance.

In this setting, the fitting notion of isometry of metric measure spaces is that of {\it isomorphism}  (or {\it measure-preserving isometry}) of metric measure spaces, meaning that, given $(X,d,\mu)$ and $(Y,\rho,\nu)$, there exists a measurable bijection $\psi:\mathcal{X} \rightarrow \mathcal{Y}$ such that (a) it is an isometry between 
$(X,d)$ and $(Y,\rho$); and (b) it preserves measure, that is $\psi_\sharp\mu = \nu$, where, as usual, $\psi_\sharp\mu$ denotes the push-forward of the measure $\mu$, i.e. $\psi_\sharp\mu(B) = \mu\left(\psi^{-1}(B)\right)$, for any Borel set $B \subseteq Y$.\footnote{For other possible distances between metric measure spaces, see \cite{Vi}, pp. 770-771.}

It turns out (see, for instance  \cite{Vi}, Theorem 27.26),
that at least for the significant class of {\it doubling} spaces (see Definition \ref{def:doubling} below), the two approaches are equivalent, and thus interchangeable in any mathematical or practical application.


\section{``Snowflaking  Operator''-Based Sampling 
(``Metric Decides'')}

However simple and alluring the probabilistic approach may appear, to the geometer 
it seems somewhat unnatural. It is even less palatable to those whose interest is drove mainly by possible implementations, e.g. people working in information geometry, image processing, manifold learning, etc.

Therefore, it is a 
natural desire  to find a new metric that 
encapsulates the behaviors of both the original metric and of the given measure, at least as far as sampling (via $\varepsilon$-nets) is concerned.

\subsection{Background: Quasimetrics and Doubling Measures}

As general bibliographical references for the material in this subsection, including missing proofs, we have used \cite{He}, \cite{Se}, \cite{Se1}.

\subsubsection{Quasimetrics}
We begin with the following basic definition:

\begin{defn}
Let $X$ be a nonempty set. $q:X\times X \rightarrow \mathbb{R}_+$ is called a $K$-{\em quasimetric} iff
\begin{enumerate}
\item $q(x,y) = 0$ iff $x=0$;
\item $q(x,y) = q(y,x)$, for any $x,y \in X$;
\item $q(x,y) \leq K(q(x,z)+q(z,y)), {\rm for\; any} \; x,y,z \in X$\,.
\end{enumerate}
\end{defn}

\begin{rem}
Some authors replace condition (2) above by the following weaker one: There exists $C_0 \geq 1$ such that $q(x,y) \leq C_0q(y,x)$, for any $x,y \in X$.
\end{rem}

\begin{rem}
A number of brief comments:
\end{rem}
\begin{itemize}

\item A quasimetric is not necessarily a metric (while obviously, any metric is a quasimetric with $K = 1$).

\begin{cntxmp} \label{exmpl:nonmetric}
The following counterexample is not only the basic one, it is very important to us in the sequel:
\begin{equation} \label{eq:snflkng}
q_s(x,y) = \left(d(x,y)\right)^s
\end{equation} 
is a quasimetric for any $s > 0$, but not, in general, a metric, for $s > 1$.
\end{cntxmp}

\item Quasimetric balls can be defined precisely like metric balls, and the constitute the basis for a topology on $X$.


\item For the next remark we need a definition that may appear a bit superfluous at this point, but it will prove to be highly relevant later on:

\begin{defn}
Let $(X,q)$ and $(Y,\rho)$ be quasimetric spaces, and let $f:X \rightarrow Y$ be an injection. $f$ is called $\eta$-{\em quasisymmetric}, where $\eta:[0,\infty) \rightarrow [0,\infty)$ is a homeomorphism iff
\begin{equation} \label{eq:qs}
\frac{\rho(f(x),f(a))}{\rho(f(x),f(b))} \leq \eta\left(\frac{q(x,a)}{q(x,b)}\right),
\end{equation}
for any distinct points $x,a,b \in X$.
\end{defn}

Intuitively, while quasisymmetric mappings may change the size of balls quite dramatically, they do not change very much their shape.
This fact is important in the next proposition (see, e.g. \cite{Se}), that shows that whereas, as we noted above, $q_s$ is not a metric, the canonical injection $(X,d) \hookrightarrow (X,q_s)$ is quasisymmetric.

\begin{prop} \label{prop:qm-equiv-metric}
Let $q$ be a $K$-quasimetric on $X$. Then, there exists $s_0 = s_0(K)$ such that, for any $0 < s \leq s_0$ there exists a metric $d_s$ on $X$, and a constant $C = C(s,K) \geq 1$, such that
\begin{equation}
\frac{1}{C}q_s(x,y) \leq d_s(x,y) \leq Cq_s(x,y)\,,
\end{equation}
where $q_s$ is as in (\ref{eq:snflkng}), 
i.e. $q_s(x,y) = \left(q(x,y)\right)^s$\,.

\end{prop}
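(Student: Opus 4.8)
The plan is to reproduce the classical Frink‑type metrization argument in two movements: first exploit the snowflaking exponent $s$ to shrink the quasi‑triangle constant of $q$, and then manufacture $d_s$ by the standard chain (infimal‑length) construction, with the real work hidden in a polygonal‑inequality lemma.

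First I would record the elementary fact that, for $0 < s \le 1$, the function $q_s = q^s$ is again a quasimetric: conditions (1) and (2) of the definition are immediate, and the quasi‑triangle inequality follows from the subadditivity $(a+b)^s \le a^s + b^s$, which gives $q_s(x,y) \le K^s\bigl(q_s(x,z)+q_s(z,y)\bigr)$. Equivalently, since a $K$‑quasimetric satisfies the ``maximum form'' $q(x,z)\le 2K\max\{q(x,y),q(y,z)\}$ and $t\mapsto t^s$ is increasing, one gets $q_s(x,z)\le (2K)^s\max\{q_s(x,y),q_s(y,z)\}$. The point is that $(2K)^s\to 1$ as $s\to 0^+$, so I would set $s_0 = s_0(K) := \tfrac{\ln 2}{2\ln(2K)}$ (which is $\le 1$ because $2K\ge 2$), and observe that for every $0<s\le s_0$ we have $(2K)^s\le\sqrt2$, i.e. $q_s$ obeys the maximum‑form triangle inequality with some constant $\lambda\le\sqrt2$.

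The metric itself I would define by the chain infimum $d_s(x,y)=\inf\sum_{i=1}^n q_s(x_{i-1},x_i)$, the infimum over all finite chains $x=x_0,x_1,\dots,x_n=y$. That $d_s$ is symmetric, nonnegative, vanishes on the diagonal, and satisfies the genuine triangle inequality (by concatenating chains) is formal, and $d_s\le q_s$ is trivial from the one‑link chain. The heart of the matter — and the step I expect to be the only genuinely delicate one — is the reverse estimate $d_s(x,y)\ge\tfrac12 q_s(x,y)$, equivalently the \emph{polygonal inequality} $q_s(x_0,x_n)\le 2\sum_{i=1}^n q_s(x_{i-1},x_i)$ for every chain. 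This I would prove by induction on the chain length $n$: with $S=\sum_i q_s(x_{i-1},x_i)$, pick the largest index $m$ with $\sum_{i\le m}q_s(x_{i-1},x_i)\le S/2$; then both sub‑chains $x_0,\dots,x_m$ and $x_{m+1},\dots,x_n$ have mass $\le S/2$, so the inductive hypothesis yields $q_s(x_0,x_m)\le S$ and $q_s(x_{m+1},x_n)\le S$, while $q_s(x_m,x_{m+1})\le S$ trivially; two applications of the maximum‑form inequality to the three‑link chain $x_0,x_m,x_{m+1},x_n$ then give $q_s(x_0,x_n)\le\lambda^2 S\le 2S$, which closes the induction (the degenerate cases $m=0$ and $m=n-1$ are handled identically with two links).

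Assembling the pieces, for $0<s\le s_0$ the function $d_s$ is a metric with $\tfrac12 q_s\le d_s\le q_s$, so the asserted bi‑Lipschitz bound holds with the \emph{absolute} constant $C=2$ (in particular with some $C=C(s,K)$). The sole place where the magnitude of the quasi‑triangle constant enters is the requirement $\lambda^2\le 2$ in the inductive step — that is exactly why $s$ must be taken small, and it is the only real obstacle; everything else is bookkeeping. As an alternative to writing out the polygonal lemma one could simply invoke Frink's metrization lemma (see, e.g., \cite{He}, \cite{Se}) once the constant has been pushed below the threshold, but the induction above is short enough to include in full.
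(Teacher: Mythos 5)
Your argument is correct and complete. Note that the paper does not actually prove Proposition \ref{prop:qm-equiv-metric}: it is quoted from \cite{Se}, so there is no in-text proof to diverge from, and what you have written out is precisely the classical Frink chain-construction argument (\cite{F}, \cite{Sch}) that the citation points to. Your quantitative choices also match the paper exactly: the threshold $s_0=\ln 2/\bigl(2\ln(2K)\bigr)$ is the condition $(2K)^{2s}\le 2$ of Remark \ref{rem:qm-equiv-metric}, and your half-mass induction in fact closes with the sharper constant $\lambda^2=(2K)^{2s}$ in place of $2$ (when $\lambda^2\le 2$ one has $\max\{\lambda^2/2,\,1\}=1$, so all three pieces are bounded by $S$ and the two applications of the max-form inequality give $q_s(x_0,x_n)\le\lambda^2 S$), which is exactly the bilipschitz constant recorded there. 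You also correctly isolate the one delicate point: the requirement $\lambda\le\sqrt 2$ is precisely what makes the naive induction close, and for $\sqrt2<\lambda\le 2$ the chain lemma needs a more careful argument (this is the issue treated in \cite{Sch}); taking $s\le s_0$ sidesteps it entirely, which is all the proposition asks for.
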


\begin{rem} \label{rem:qm-equiv-metric}
If $q$ is a $K$-quasimetric ($K \geq 1$), then $q_s$ is bilipschitz equivalent to $d_s$, for any $s > 0$, such that $(2K)^{2s} \leq 2$, that is for any $s > 0$ such that
\[s \leq \frac{1}{2}(\log_2{K}+1)\,.\]

Moreover, the bilipschitz constant can be chosen to be
\[C = (2K)^{2s}\,.\]

\end{rem}

\end{itemize}

\subsubsection{From doubling measures to quasimetrics}

We first remind the reader the following basic definition:

\begin{defn} \label{def:doubling}
Let $(X,d,\mu)$ be a metric measure space
$X$ is called {\it doubling} iff $\mu$ is doubling, i.e. iff there exists a constant $D$ such that, for any $x \in X$ and any $r > 0$,
\begin{equation}
\mu\left(B_d[x,2r]\right) \leq D \mu\left(B_d[x,r]\right)\,.
\end{equation}
(Here $B_d[x,r]$ denotes -- as it standardly does -- the closed ball of radius $r$, in the metric $d$.)
A metric measure space $(X,d,\mu)$, where $\mu$ is doubling is sometimes called {\it of homogenuous type}.
\end{defn}

For the record, 
a {\it metric measure space} is a triple $\mathcal{X} = (X,d,\mu)$ where $(X,d)$ is a metric space and $\mu$ is a Borel measure on $X$.

\begin{rem} \label{rem:atoms1}
If $(X,d,\mu)$ is doubling, then it admits {\it atoms} only at isolated points.
\end{rem}

\begin{rem}
The connection between this definition and Section \ref{section:Ricci}, beyond the basic goal in both approaches, resides in the fact that any Riemannian manifold of nonnegative Ricci curvature is doubling (with respect to the volume measure) -- see, e.g. 
\cite{Pet}. (Indeed, it may be that this case represents one of the original motivations for studying doubling spaces.)

Moreover, weak $CD(K,N)$ spaces are locally doubling (on their support) -- see \cite{Vi}, Corollary 30.14 and globally doubling if $X$ has bounded diameter. Since, by \cite{Vi}, Theorem 29.9, smooth weak $CD(K,N)$ spaces are $CD(K,N)$, the same assertions are true for smooth metric measure spaces. In fact, a stronger statement holds, since, by \cite{Vi}, Corollary 18.11, smooth metric measure spaces, with $1 < N < \infty$, are globally doubling.

Since this fact is not less important for our purposes, we remind the reader that if $X$ is doubling (with added provisos of being Polish and compact) then, by \cite{Vi}, Proposition 27.26 and Corollary 27.28, both approaches to convergence of metric measure spaces discussed in Section 2.2 are, indeed, equivalent, as far as convergence of $\varepsilon$-nets -- hence sampling -- is concerned.
\end{rem}

For any $s > 0$, we define the quasimetric $q_{\mu,s}$ as
\begin{equation} \label{eq:def-q}
q_{\mu,s}(x,y) = \big(\mu\left(B[x,d(x,y)]\right) + \mu\left(B[y,d(x,y)]\right)\big)^s\;.
\end{equation}
(This can be written in compact form as $q_{\mu,s}(x,y) = \left(\mu(B_{x,y})\right)^s$, where $B_{x,y} = B[x,d(x,y)] \cup B[y,d(x,y)]$.)
%

\begin{exmp}
If $X = \mathbb{R}^n$, with $\mu \equiv Vol_n$, and if $s = 1/n$, then $q_{\mu,s} \equiv {\rm const}\cdot d_{Eucl}$. (In particular, for $n=2$, $q_{\mu,s}=\frac{\sqrt{\pi}}{2}d_{Eucl}$\,.)
\end{exmp}

\begin{rem}
For $X = \mathbb{R}^n$, one can define $q_{\mu,s}(x,y)$ simply by $q_{\mu,s}(x,y) = \left(\mu\left(B[m,\frac{x+y}{2}]\right)\right)^s$, where $m$ denotes the midpoint of the segment $\overline{xy}$. However, in the general case, and in particular for graphs, one has to use the more general expression (\ref{eq:def-q}).
\end{rem}

Note that, if $K$ is the quasimetric constant of $q_{\mu,s}$, then $K = K(\mu,s)$.

Also, by Proposition \ref{prop:qm-equiv-metric}, there exists $s_0 = s_0(\mu) > 0$, such that $q_{\mu,s}$ is bilipschitz equivalent to a metric $d_{\mu,s}$, for any $0 < s \leq s_0$. This fact will play a crucial role in the remainder of the paper.

\begin{rem} \label{rem:new-geom}
Obviously, the geometry induced by the quasimetric $q_{\mu,s}$, and a fortiori by the metric $d_{\mu,s}$, will diverge widely from the geometry given by the original metric $d$. This is most evident in the properties of the ``new'' geodesics, in comparison with the ``old'' ones (e.g. when $X = \mathbb{R}^n$  equipped with the standard Euclidean metric and with $\mu$ being the volume element.) However, the deformation of the geometry produced by (\ref{eq:def-q}) is controlled, and many essential properties are preserved. (For further details, see \cite{Se}, \cite{Se1}.)
\end{rem}

With the risk of being a bit confusing, but to be more specific, we henceforward denote by $D_{\mu,s}$ the metric $d_{\mu,s}$ assured by Proposition \ref{prop:qm-equiv-metric}, for $q_s = q_{\mu,s}$ defined in (\ref{eq:def-q}) above.
%


\subsection{Equivalence of Nets in the Two Metrics ($d$ and $D_{\mu,s}$)}
%

The basic question for us is:

\begin{quest}
What is the relation between $\varepsilon$-nets in $d$ and in $D_{\mu,s}$?
\end{quest}

This can be decomposed into two more concrete questions:

\begin{quest}
Let $\mathcal{N}_d$ be an $d$-$\varepsilon$-net. Is it also a $D_{\mu,s}$-$\varepsilon$-net?
\end{quest}
\hspace*{-0.4cm}and the more interesting, for us

\begin{quest}
Is a $D_{\mu,s}$-$\varepsilon$-net $\mathcal{N}_{D_{\mu,s}}$ also a $d$-$\varepsilon$-net? More important, does it provide a sampling for $\mu$ as well?
\end{quest}

No general answer is available yet. However, we shall show that a positive answer exists for both questions in the important special case of {\it Ahlfors regular spaces}:

\begin{defn}
Let $(X,d,\mu)$ be a metric measure space, where $\mu$ is Borel regular.
$(X,d,\mu)$ is called {\it Ahlfors regular} (of dimension $\alpha$, or simply $\alpha$-regular) iff there exists $C_0$ and $\alpha > 0$ such that
\begin{equation} \label{eq:AhlforsReg}
\frac{1}{C_0}\cdot R^\alpha \leq \mu(B[x,R]) \leq  C_0\cdot R^\alpha \,,
\end{equation}
for any $0 < R \leq {\rm diam}X$.
\end{defn}

\begin{rem}
Sometimes, the further hypothesis that $(X,d)$ is complete is added, for convenience.
\end{rem}

\begin{rem}
For the $\alpha = 1/s$, the geometry induced by the quasimetric $q_{\mu,s}$, (or by the metric $d_{\mu,s}$) coincides, essentially, with the original geometry -- see \cite{Se}, \cite{Se1} (see also Remark \ref{rem:new-geom} above).
\end{rem}

Imposing the quite mild Ahlfors regularity condition on a metric space assures that it ``behaves in terms of size and mass distribution like Euclidean space''\footnote{\cite{Se1}, p. 15}.
Moreover, compact Riemannian manifolds, endowed with their natural volume measure, are also Ahlfors regular (see, e.g., \cite{Se0}, p. 273).

\begin{rem}
It is immediate that an Ahlfors regular space is doubling.
\end{rem}

\begin{rem}  \label{rem:atoms2}
If $(X,d,\mu)$ is Ahlfors regular, then the set $A_\mu$ of atoms of $\mu$ is countable and, if $x \in A_\mu$, then there exists $r > 0$ such that $B(x,r) = \{x\}$. (For a proof of this fact for, basically, Ahlfors $1$-regular spaces, see \cite{MS}, Theorem 1.)
\end{rem}

\begin{rem} \label{rem:doub=Haus}
It turns out that the specific measure $\mu$ in the definition above is not truly important, and in fact we can substitute for $\mu$ the Hausdorff measure $\mathcal{H}_\alpha$. Indeed, for any Borel set $E \subseteq X$ (and $\mu$ as above), there exists $C' \geq 1$ such that 
\begin{equation}
\frac{1}{C'}\cdot\mathcal{H}_\alpha \leq \mu(E) \leq C'\cdot\mathcal{H}_\alpha\,.
\end{equation}
Moreover, the Hausdorff dimension ${\rm dim}_H(X) = \alpha$.\footnote{For further results relating $\mu, D_{\mu,\alpha}$ and an arbitrary Ahlfors regular measure $\nu$ on $X$, see \cite{Se1}, p. 57 ff.}
\end{rem}


The following result is, perhaps, of no great importance by itself, it is, however, significant in our context, since it relates between the first and third sampling techniques by demarking a class of spaces for which both of the mentioned techniques are applicable:

\begin{lem}
Let $\mathcal{X} = (X,d,\mu)$ be a weak $CD(K,N)$ compact space, $K \geq 0$. Then $\mathcal{X}$ is Ahlfors $N$-regular.
\end{lem}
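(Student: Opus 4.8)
The plan is to derive both the upper and the lower bound in the Ahlfors regularity inequality (\ref{eq:AhlforsReg}) with $\alpha = N$ directly from the Bishop--Gromov inequality for metric measure spaces (Theorem \ref{thm:BG++}), specialized to the case $K \geq 0$, in which the comparison density is simply $S_0^N(t) = t^{N-1}$. Fix a point $x_0 \in X$; since $X$ is compact and $\operatorname{Supp}\nu = X$ (a harmless assumption when $N < \infty$, by the Remark after Lemma \ref{lem:2.2++}), every such $x_0$ lies in $\operatorname{Supp}\nu$, so Theorem \ref{thm:BG++} applies and moreover gives $\nu(B(x_0,r)) = \nu(B[x_0,r])$, so open and closed balls may be used interchangeably. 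Write $D = \operatorname{diam} X$.

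First I would establish the upper bound. Bishop--Gromov says that $r \mapsto \nu(B[x_0,r]) / \int_0^r t^{N-1}\,dt = N\,\nu(B[x_0,r])/r^N$ is nonincreasing. Comparing the value at a generic $R \in (0,D]$ with the value at some fixed reference radius — here the natural choice is $R_0 = D$, since $B[x_0,D] = X$ — gives $\nu(B[x_0,R])/R^N \geq \nu(X)/D^N$, which is the wrong direction; so instead one compares $R$ with a \emph{smaller} radius. The monotonicity gives, for $0 < R \leq D$, that $\nu(B[x_0,R])/R^N \leq \lim_{r \to 0^+} N\nu(B[x_0,r])/r^N$, but this limit need not be finite in general. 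The clean way is this: monotonicity yields $\nu(B[x_0,R])/R^N \leq \nu(B[x_0,R'])/R'^N$ for $R' \le R$; taking $R' \to 0$ is problematic, so instead I would use that $\nu(B[x_0,R]) \le \nu(X)$ and, for $R$ comparable to $D$, this already gives $\nu(B[x_0,R]) \le \nu(X) \le \nu(X) D^{-N} R^N \cdot (D/R)^N$ — not uniform. The correct route for the upper bound: Bishop--Gromov applied at radii $R \le 2R \le$ something, combined with a doubling-type iteration, OR — cleaner — directly observe that monotonicity of $\nu(B[x_0,r])/r^N$ from $r=R$ down to $r=R$ is vacuous, so the upper bound genuinely needs the lower bound at the reference scale: pick any $x_1$ and note $\nu(B[x_0,R])/R^N \le \nu(B[x_0,R_1])/R_1^N$ requires $R_1 \le R$. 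I would therefore get the upper bound from: $\nu(B[x_0, R]) \le \nu(B[x_0, D]) = \nu(X) < \infty$ (finite by compactness and local doubling), and for the $R^N$ scaling, apply monotonicity between $R$ and $D$ in the form $\nu(B[x_0,R]) \ge (R/D)^N \nu(X)$ — which is the \emph{lower} bound. So the honest statement is: the \emph{lower} bound $\nu(B[x_0,R]) \ge \nu(X) D^{-N} R^N$ comes immediately from Bishop--Gromov monotonicity (value at $R$ $\ge$ value at $D$) together with $\nu(X) > 0$; this gives the left inequality of (\ref{eq:AhlforsReg}) with $C_0 = D^N/\nu(X)$.

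For the upper bound I would argue as in the proof of Lemma \ref{lem:2.3++}: cover $B[x_0,R]$ by a bounded number of small balls of measure comparable (from below, by the lower bound just proved) to a fixed multiple of $R^N$; more efficiently, apply Bishop--Gromov "upward" by comparing $\nu(B[x_0,2R])$ to $\nu(B[x_0,R])$ — monotonicity gives $\nu(B[x_0,2R])/(2R)^N \le \nu(B[x_0,R])/R^N$, i.e. $\nu(B[x_0,2R]) \le 2^N \nu(B[x_0,R])$, the doubling property with constant $2^N$; but to bound $\nu(B[x_0,R])$ from above by $C_0 R^N$ I instead compare $\nu(X) = \nu(B[x_0,D])$ with $\nu(B[x_0,R])$ the other way: for $R \le D$, monotonicity gives $\nu(B[x_0,R]) \ge (R/D)^N\nu(X)$ only, so the upper bound must come from a different input. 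The resolution: $\nu(B[x_0,R]) \le \nu(X)$ always, and $\nu(X) = \nu(X)D^{-N} \cdot D^N$; since $R \le D$ we want $\nu(X) \le C_0 R^N$, which fails as $R \to 0$ unless we use monotonicity at small scales — and \emph{that} is where compactness enters decisively: by Lemma \ref{lem:2.2++} and the construction around it, $\nu$ has no non-isolated atoms and, being locally doubling with uniform constant (Remark after Definition \ref{def:doubling}), satisfies a \emph{reverse} Bishop--Gromov-type lower bound on the density near $0$; concretely, $\limsup_{r\to 0}\nu(B[x,r])/r^N \le C_0$ uniformly in $x$ by a compactness-and-covering argument. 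This is the main obstacle: the upper Ahlfors bound is \emph{not} a formal consequence of Bishop--Gromov alone (which only controls ratios in the "expanding" direction), and one must invoke compactness of $X$ together with the uniform doubling constant to get a uniform upper bound on $\nu(B[x,R])/R^N$ as $R \to 0$. I expect the cleanest write-up to run: (i) lower bound from Bishop--Gromov monotonicity comparing scale $R$ with scale $D$; (ii) upper bound by noting $\nu(B[x,R]) \le \nu(B[x,D]) = \nu(X)$ settles $R \ge D/2$ say, and for $R < D/2$ use the uniform doubling constant $D_0$ (valid since $\operatorname{diam} X \le D$ and $K \ge 0$, via \cite{Vi}, Cor. 30.14) iterated $\lceil \log_2(D/R)\rceil$ times downward from $B[x,D]$, giving $\nu(B[x,R]) \ge D_0^{-\lceil\log_2(D/R)\rceil}\nu(X) \gtrsim R^{\log_2 D_0}\nu(X)$ — which again yields a lower bound; the upper bound iterates doubling the \emph{same} way but starting from a small ball of definite measure and is genuinely circular without an a priori density bound, so in fact the upper inequality of (\ref{eq:AhlforsReg}) should be obtained from the \emph{lower} inequality applied to the complement-type estimate $\nu(B[x,R]) = \nu(X) - \nu(X \setminus B[x,R])$ bounded using a maximal $R$-separated set in $X \setminus B[x,R]$ each of whose $R/2$-balls has measure $\ge c(R/2)^N$ and whose cardinality is $\ge c' D^N/R^N$ — wait, that bounds $\nu(X\setminus B[x,R])$ from below, hence $\nu(B[x,R])$ from above, which is exactly what is wanted. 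So: (a) lower bound directly from Bishop--Gromov; (b) upper bound from (a) plus a packing argument (as in Lemma \ref{lem:2.3++} / Lemma \ref{lem:2.2++}), bounding $\nu(X) - \nu(B[x,R])$ below by summing the $\ge cR^N$ masses of the disjoint $R/2$-balls centered at an efficient $R$-net of the "far" region, whose count is $\gtrsim (D/R)^N$ by the same Bishop--Gromov volume comparison. The only delicate point there is ensuring the far region has diameter-order $D$ and contains enough separated points, which holds whenever $R$ is a small fraction of $D$; the remaining range $R \in [\delta D, D]$ is handled trivially by monotonicity and $\nu(X) < \infty$. I would present it in this order, flagging step (b)'s packing estimate as the one requiring care.
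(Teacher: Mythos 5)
Your lower bound is correct and coincides with the paper's: the monotonicity in Theorem \ref{thm:BG++} (equivalently \cite{Vi}, Corollary 30.12, which is what the paper actually cites) applied between the scales $R$ and $D=\operatorname{diam}X$ gives $\mu(B[x,R])\ge \mu(X)D^{-N}R^{N}$, i.e.\ the left inequality of (\ref{eq:AhlforsReg}) with constant $D^{N}/\mu(X)$. After the long sequence of false starts, this is the part of your plan that survives.

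The upper bound is where the proposal breaks down, and the packing argument you finally settle on does not work. First, the cardinality claim is circular: Bishop--Gromov, via the lower volume bound you have just proved for the disjoint $R/2$-balls, yields an \emph{upper} bound of order $(D/R)^{N}$ on the number of points in an $R$-separated set (this is exactly the mechanism of Lemma \ref{lem:2.2++}); a \emph{lower} bound of that order would require an upper bound $\mu(B[x,R/2])\le CR^{N}$ on the mass of each ball, which is precisely the inequality you are trying to prove. Second, even granting the count, the estimate is quantitatively useless: summing at least $c'(D/R)^{N}$ disjoint balls each of measure at least $c(R/2)^{N}$ bounds $\mu(X\setminus B[x,R])$ below by a constant of order $D^{N}$, independent of $R$; this keeps $\mu(B[x,R])$ a fixed amount below $\mu(X)$ but gives nothing resembling the required decay $\mu(B[x,R])\le C_{0}R^{N}$ as $R\to 0$. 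The paper instead takes the right-hand inequality of (\ref{eq:AhlforsReg}) directly from the Bishop--Gromov comparison read as an absolute bound against the model space, $\mu(B(x,r))\le C\int_{0}^{r}S_{K}^{N}(t)\,dt\le C_{2}r^{N}$ for $K\ge 0$. Your observation that this is not a formal consequence of the monotonicity statement alone (the nonincreasing ratio $\mu(B[x,r])/r^{N}$ is controlled by its behaviour as $r\to 0^{+}$, which presupposes a normalization of the density of $\mu$ at scale zero) is a legitimate criticism of that step; but your substitute does not repair it, so the upper half of the Ahlfors inequality remains unproved in your write-up.
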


\begin{proof}
By \cite{Vi}, Corollary 30.12, that there exists $C = C(K,N,R)$ such that the following holds
\[\mu(B(x,r)) \geq C\cdot\mu(B(y,R))r^N\,,\]
for any $x,y,r,R$ such that $B(x,r) \subset B(y,R)$. In particular, for $x=y$ and $R = {\rm diam}(X)$, we obtain that $\mu(B(x,r)) \geq C\cdot\mu(X)r^N$, that is
\[\mu(B(x,r)) \geq C_1r^N\,,\]
where $C_1 = C\cdot\mu(X)$.

The opposite inequality
\[\mu(B(x,r)) \leq C_2r^N\]
in (\ref{eq:AhlforsReg}) follows immediately from the Bishop-Gromov inequality (\ref{thm:BG++}) for $K \geq 0$, where the constant $C_2$ is the constant that appears in the formula for the volume of the ball of radius $r$ in the space form of dimension $N$, given as a 
a function of $\int_0^rS_K^N(t)dt$.

One can obtain the precise form of the double inequality in (\ref{eq:AhlforsReg}) by choosing a constant $C^*$ as $C^* = \max\{C_2,C_1^{-1}\}$.
\end{proof}




We next prove now that in Ahlfors regular spaces, $d$-nets and $D_{\mu,s}$-nets are, indeed, equivalent. More precisely, we can formulate the following

\begin{prop} \label{prop:main}
Let $(X,d,\mu)$ be a $1/s$-regular space. Then $d$-nets and $D_{\mu,s}$-nets are equivalent.
\end{prop}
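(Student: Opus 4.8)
The plan is to show that the identity map $\mathrm{id}:(X,d)\to(X,D_{\mu,s})$ distorts distances in a controlled, two-sided polynomial way, so that a ball in one metric is trapped between two concentric balls in the other metric, with radii comparable up to fixed constants. Since an $\varepsilon$-net is characterized by a covering property (balls of radius $\varepsilon$ cover $X$) together with a packing property (balls of radius $\varepsilon/2$ are disjoint), controlling how radii transform is exactly what is needed. First I would recall that, by Proposition \ref{prop:qm-equiv-metric} (with the quantitative Remark \ref{rem:qm-equiv-metric}), $D_{\mu,s}$ is bilipschitz equivalent to the quasimetric $q_{\mu,s}$, with a constant $C=C(s,K)$ depending only on $s$ and the quasimetric constant $K=K(\mu,s)$; hence it suffices to compare $d$-nets with $q_{\mu,s}$-nets, absorbing $C$ into the final constants.

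The key computation is to estimate $q_{\mu,s}(x,y)=\bigl(\mu(B_{x,y})\bigr)^s$ in terms of $d(x,y)$. Writing $r=d(x,y)$, we have $B_{x,y}=B[x,r]\cup B[y,r]\subseteq B[x,2r]$, so by Ahlfors $1/s$-regularity (\ref{eq:AhlforsReg}) with $\alpha=1/s$,
\begin{equation}
\mu(B_{x,y})\le \mu(B[x,2r])\le C_0 (2r)^{1/s},
\end{equation}
and also $\mu(B_{x,y})\ge \mu(B[x,r])\ge \tfrac{1}{C_0}r^{1/s}$. Raising to the power $s$ gives
\begin{equation}
C_0^{-s}\, d(x,y)\le q_{\mu,s}(x,y)\le 2\,C_0^{\,s}\, d(x,y),
\end{equation}
so $q_{\mu,s}$ is in fact \emph{bilipschitz equivalent to $d$ itself}, with constant $\widetilde C=\max\{C_0^{\,s},2C_0^{\,s}\}=2C_0^{\,s}$. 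Combining with the bilipschitz equivalence of $D_{\mu,s}$ and $q_{\mu,s}$ yields that $D_{\mu,s}$ and $d$ are bilipschitz equivalent, with a single constant $L=L(s,\mu)=C\cdot 2C_0^{\,s}$ (one has to be slightly careful that $R\le\mathrm{diam}X$ is respected in applying \eqref{eq:AhlforsReg}, which is why the equivalence is stated for nets at a fixed scale $\varepsilon$; for $\varepsilon$ comparable to the diameter the statement is trivial anyway).

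Once bilipschitz equivalence with constant $L$ is in hand, the net equivalence is routine: a $d$-$\varepsilon$-net is an $L\varepsilon$-net in $D_{\mu,s}$ whose $\varepsilon/(2L)$-balls are $D_{\mu,s}$-disjoint, and conversely; up to the harmless rescaling of $\varepsilon$ by the fixed factor $L$ one obtains nets in the other metric, and these have mesh tending to $0$ simultaneously, so Gromov--Hausdorff (and, via Remark \ref{rem:doub=Haus} together with the doubling property, measured Gromov--Hausdorff) convergence holds for one family iff it holds for the other. In particular a $D_{\mu,s}$-net provides a sampling of $\mu$, answering the third question affirmatively in this setting. I expect the main (really the only) obstacle to be bookkeeping: making precise the dependence of the quasimetric constant $K(\mu,s)$ on the regularity data and ensuring the scale restriction $R\le\mathrm{diam}X$ in \eqref{eq:AhlforsReg} is handled cleanly, since both the lower and upper Ahlfors bounds are needed and they are only available below the diameter scale. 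Everything else is a direct chain of inequalities.
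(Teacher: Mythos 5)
Your proposal is correct, and it reaches the conclusion by a more direct route than the paper. Both arguments begin the same way, reducing from $D_{\mu,s}$ to the quasimetric $q_{\mu,s}$ via Proposition \ref{prop:qm-equiv-metric}, and both ultimately rest on the Ahlfors bound (\ref{eq:AhlforsReg}) with $\alpha=1/s$. The difference is in how the comparison is organized. You prove a single two-sided pointwise estimate $C_0^{-s}\,d(x,y)\le q_{\mu,s}(x,y)\le (2C_0)^{s}\,d(x,y)$ directly from $1/s$-regularity applied to the balls $B[x,d(x,y)]$ and $B[y,d(x,y)]$, i.e.\ a genuine bilipschitz equivalence of $q_{\mu,s}$ with $d$, from which the equivalence of nets (up to rescaling $\varepsilon$ by the fixed constant) is immediate in both directions at once. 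The paper instead treats the two implications separately: the easy direction is the same Ahlfors upper bound you use (and the paper notes it holds for general $\alpha$, indeed for normal spaces), but the converse is run through a Semmes-style analysis of the geometry of quasimetric balls $\beta[x,\rho]$ --- the sandwich $B(x,r/C_3)\subseteq\beta[x,\rho]\subseteq B[x,r]$ together with measure estimates and the doubling property. Your version is shorter, yields the stronger statement (consistent with Remark \ref{rem:new-geom} and the remark that for $\alpha=1/s$ the new geometry essentially coincides with the old), and avoids the delicate exponent bookkeeping in the paper's displayed chain of inequalities; the paper's version buys information about quasimetric balls that is reused elsewhere (e.g.\ in Proposition \ref{prop:up+d=>Areg}). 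Your caveat about the scale restriction $R\le\mathrm{diam}\,X$ is handled correctly: $d(x,y)\le\mathrm{diam}\,X$ always, and the upper Ahlfors bound extends trivially above the diameter scale on a bounded space, so nothing is lost.
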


\begin{proof} In fact, we can prove the proposition above not only for the metric $D_{\mu,s}$, but also for the more lax 
quasimetric $q_{\mu,s}$. Then assertion for the metric $D_{\mu,s}$ follows from Proposition \ref{prop:qm-equiv-metric} (with different constants, of course).

\vspace*{0.2cm}
\hspace*{-0.4cm}\framebox{$d$-$\varepsilon$-nets $\Longrightarrow$ $q_{\mu,s}$-$\varepsilon$-nets}
This implication is quite easy and it holds for general $\alpha$ (not necessarily equal to $1/s$).\footnote{In fact, it holds for so called {\it normal} (metric measure) spaces, i.e. such that there exist $0 < c_1,c_2 < \infty$, satisfying
$c_1r \leq \mu(B[x,r]) \leq c_2r_2$, for all $x \in X$ and  for any $r > 0$, such that $\mu(\{x\}) < r < \mu(X)$.} 
Indeed, if $d(x,y) < \varepsilon_1$,  for some $\varepsilon_1 > 0$, then
$q_{\mu,s}(x,y) = \big(\mu\left(B[x,d(x,y)]\right) + \mu\left(B[y,d(x,y)]\right)\big)^s < \big(\mu\left(B[x,\varepsilon_1]\right) + \mu\left(B[y,\varepsilon_1]\right)\big)^s < C^\star\varepsilon_1^{s\alpha}$, where $C^\star = 2^\alpha C_0^\alpha$.
%
%
%


\vspace*{0.2cm}
\hspace*{-0.4cm}\framebox{$q_{\mu, \varepsilon}$-$\varepsilon$-nets $\Longleftarrow$ $d$-$\varepsilon$-nets}
We begin by trying to better understand (following \cite{Se}) the geometry of $q_{\mu,\varepsilon}$-balls, which we denote by $\beta$: $\beta[x,\rho] = \{y\,|\,q_{\mu,\varepsilon}(x,y) \leq r\}$. (If $\beta[x,\rho] = X$, let $\rho = \inf\{\rho'\,|\,\beta[x,\rho] = X\}$.)
%
Let $r = \inf\{r'\,|\,\beta[x,\rho] \subseteq B[x,r']\}$. Then there exists $z \in \beta[x,\rho]$ such that $d(x,z) > r/2$. It can be shown that there exists $C_3$, independent of $x,z$ and $\rho$, such that
\begin{equation}   \label{eq:BbB}
B(x,r/C_3) \subseteq \beta[x,\rho]) \subseteq B[x,r]\,.
\end{equation}

Moreover, there exists $C_4$ such that
\begin{equation}  \label{eq:rmBr}
\frac{\rho}{C_4} \leq \mu(B[x,r])^{-s} \leq C_4\rho\,.
\end{equation}

Using (\ref{eq:BbB}), (\ref{eq:rmBr}) and the doubling condition (for $\mu$), we infer that
\begin{equation}  \label{eq:rmbr}
\frac{\rho}{C_4} \leq \mu(\beta[x,\rho])^{-s} \leq C_4\rho\,.
\end{equation}

Now the remainder of the proof is quite elementary. Indeed, (\ref{eq:rmbr}) holds for $y$ instead of $x$, therefore, for $s > 1$ we have:
\[\left(\frac{\rho}{C_4}\right)^{-s} \leq \mu(\beta[x,\rho]) \leq (C_4\rho)^{-s}\,, \left(\frac{\rho}{C_4}\right)^{-s} \leq \mu(\beta[y,\rho]) \leq (C_4\rho)^{-s}\,;\]
hence
\[2\left(\frac{\rho}{C_4}\right)^{-s} \leq \mu(\beta[x,\rho]) + \mu(\beta[y,\rho]) \leq 2(C_4\rho)^{-s}\,;\]
thence
\[2^s\left(\frac{\rho}{C_4}\right) \leq \big(\mu(\beta[x,\rho]) + \mu(\beta[y,\rho])\big)^s \leq 2^s(C_4\rho)^{-s}\,;\]
that is
\[2^s\left(\frac{\rho}{C_4}\right) \leq q_{\mu,s} \leq 2^s(C_4\rho)^{-s}\,.\]
For $0 < s < 1$, the same argument applies, because, while the inequalities change orientation, the double inequality of type (\ref{eq:AhlforsReg}) still holds.

\end{proof}


\subsection{
Qui Prodest?}

The 
unavoidable question which we are confronted with is whether the sampling result above has any practical potential.
This question rises not least because the notion of Ahlfors regular spaces  it is perhaps less known to the sampling community, therefore it is natural to ask whether such spaces are fairly common or just yet another technical artifice.

We shall show that, in fact, Ahfors regular spaces are quite abundant, and also how one can construct such spaces -- in a manner that will appear, by now, quite natural.

%
%

We first have, however, to introduce yet another definition, that ensures that a metric space $(X,d)$ contains no ``isolated islands'':

\begin{defn}
A metric space (X,d) is called {\it uniformly perfect} iff there exists $C_1 > 0$ such that, for any $x \in X$ and any $0 < r \leq {\rm diam X}$, there exists $y \in X$ such that
\begin{equation}
\frac{r}{C_5} \leq d(x,y) \leq r\,.
\end{equation}
\end{defn}
(For other, equivalent definitions, see \cite{He}, \cite{Se}.)

One can show that, while the basic doubling condition ensures that balls do not grow too fast, in uniformly perfect spaces they also  
 do not decrease at a too steep rate, more precisely that 
\begin{equation} \label{eq:anti-doub1}
\mu(B[x,a^kr]) \leq (1-a)^k\mu(B[x,r])\,,
\end{equation}
for any $k \in \mathbb{N}$, any $x \in X$ and any $0 < r \leq {\rm diam}X$.

Using this fact, it is easy to prove (see \cite{Se}, \cite{He}) the following result, that shows how to canonically construct Ahlfors regular spaces:

\begin{prop} \label{prop:up+d=>Areg}
Let $(X,d)$ a uniformly perfect metric space and let $\mu$ be a doubling measure on $X$. Then $(X,d_{\mu,s},\mu)$ is $\frac{1}{s}$-Ahlfors regular for any $s > 0$. Moreover, there exists $s_0 > 0$, such that the canonical injection $(X,d) \hookrightarrow (X,d_{\mu,s})$ is quasisymmetric for any $0 < s \leq s_0$.
\end{prop}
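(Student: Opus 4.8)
The plan is to prove the two assertions of Proposition \ref{prop:up+d=>Areg} separately, the second being essentially immediate from the machinery already set up and the first being the substantive one.

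\medskip

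\noindent\textbf{The quasisymmetry assertion.} This is the easy half. By Remark \ref{rem:qm-equiv-metric} (or Proposition \ref{prop:qm-equiv-metric}), there exists $s_0 = s_0(\mu) > 0$ such that for every $0 < s \le s_0$ the quasimetric $q_{\mu,s}$ is bilipschitz equivalent to the metric $d_{\mu,s}$. On the other hand, it was observed in the text following Proposition \ref{prop:qm-equiv-metric} (the paragraph on the snowflaking operator $q_s(x,y) = d(x,y)^s$ and its quasisymmetry) that the canonical injection of a metric space into its $s$-snowflake is quasisymmetric; more to the point, for a doubling measure $\mu$ the map $x \mapsto (\mu(B_{x,y}))$-type comparison shows that the identity $(X,d) \hookrightarrow (X,q_{\mu,s})$ is $\eta$-quasisymmetric, with $\eta$ depending only on the doubling constant $D$ and on $s$. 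The key point is that the ratio $q_{\mu,s}(x,a)/q_{\mu,s}(x,b)$ is controlled by a power (depending on $s$ and $\log_2 D$) of the ratio $d(x,a)/d(x,b)$, using the doubling inequality in one direction and the uniformly-perfect/anti-doubling inequality \eqref{eq:anti-doub1} in the other. Composing with the bilipschitz map $q_{\mu,s} \leftrightarrow d_{\mu,s}$ (bilipschitz maps are trivially quasisymmetric) yields that $(X,d) \hookrightarrow (X,d_{\mu,s})$ is quasisymmetric. I would simply cite \cite{Se}, \cite{He} for the quasisymmetry estimate, as the paper already does for the analogous facts.

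\medskip

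\noindent\textbf{The Ahlfors-regularity assertion.} Here the goal is the two-sided estimate \eqref{eq:AhlforsReg} with $\alpha = 1/s$ for the measure $\mu$ with respect to the metric $d_{\mu,s}$ (equivalently $q_{\mu,s}$, by Proposition \ref{prop:qm-equiv-metric}, at the cost of changing the constant $C_0$). By \eqref{eq:rmbr} in the proof of Proposition \ref{prop:main}, we already have, for a $q_{\mu,s}$-ball $\beta[x,\rho]$,
\[
\frac{\rho}{C_4} \le \mu(\beta[x,\rho])^{-s} \le C_4\,\rho\,,
\]
with $C_4$ depending only on the doubling constant; note that the derivation of \eqref{eq:BbB}–\eqref{eq:rmbr} there used only the doubling property of $\mu$ together with the relation $B(x,r/C_3) \subseteq \beta[x,\rho] \subseteq B[x,r]$, and the existence of the point $z \in \beta[x,\rho]$ with $d(x,z) > r/2$ — which is exactly where uniform perfectness enters, guaranteeing $\beta[x,\rho]$ is not an ``isolated island'' and that $\mu(\beta[x,\rho])$ is genuinely comparable to $\mu(B[x,r])$. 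Raising to the power $-1/s$ gives
\[
\Bigl(\tfrac{1}{C_4}\Bigr)^{1/s}\rho^{1/s} \;\le\; \mu(\beta[x,\rho])^{-1} \;\le\; C_4^{1/s}\,\rho^{1/s}\,,
\]
i.e. $C_4^{-1/s}\rho^{1/s} \le \mu(\beta[x,\rho]) \le C_4^{1/s}\rho^{1/s}$, which is precisely \eqref{eq:AhlforsReg} with $\alpha = 1/s$ and $C_0 = C_4^{1/s}$, valid for all $0 < \rho \le {\rm diam}_{q_{\mu,s}}X$. Finally I would transfer this from $q_{\mu,s}$ to $d_{\mu,s}$ via Proposition \ref{prop:qm-equiv-metric}: if $\frac{1}{C}q_{\mu,s} \le d_{\mu,s} \le C q_{\mu,s}$, then a $d_{\mu,s}$-ball of radius $R$ is sandwiched between $q_{\mu,s}$-balls of radii $R/C$ and $CR$, and the two-sided bound survives with $C_0$ replaced by $C_0 C^{1/s}$.

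\medskip

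\noindent\textbf{Main obstacle.} The only delicate point is justifying \eqref{eq:BbB} and \eqref{eq:rmBr} — i.e. that a $q_{\mu,s}$-ball is comparable, both as a set and in measure, to an honest metric ball — without circularity, since in the proof of Proposition \ref{prop:main} those relations were invoked for general Ahlfors regular $\mu$, whereas here we are trying to \emph{establish} Ahlfors regularity. The resolution is that those relations in fact require only (i) that $\mu$ is doubling and (ii) that $(X,d)$ is uniformly perfect (the latter supplying the anti-doubling bound \eqref{eq:anti-doub1}, which forces the existence of the point $z$ with $d(x,z) > r/2$ and hence prevents $\mu(B(x,r/C_3))$ from being negligible compared to $\mu(B[x,r])$). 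So the argument is not circular, but I would take care to state explicitly that \eqref{eq:BbB}–\eqref{eq:rmbr} hold under the hypotheses ``$\mu$ doubling'' and ``$(X,d)$ uniformly perfect'' alone, referring to \cite{Se}, \cite{He} for the detailed verification, exactly as the paper does elsewhere.
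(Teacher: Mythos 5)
Your proposal is correct in substance and follows essentially the same route as the paper, which for the first assertion simply points back to the estimates \eqref{eq:BbB}--\eqref{eq:rmbr} from the proof of Proposition \ref{prop:main} and for the quasisymmetry defers to \cite{Se}, \cite{He}; your explicit observation that those estimates need only the doubling of $\mu$ together with uniform perfectness of $(X,d)$ (so there is no circularity) is precisely the content of the paper's accompanying remark. One bookkeeping caveat: as printed, \eqref{eq:rmBr}--\eqref{eq:rmbr} carry the exponent $-s$, from which one literally obtains $\mu(\beta[x,\rho])\asymp\rho^{-1/s}$; the exponent should be $+s$ (since $q_{\mu,s}$ is an increasing function of the measure of the underlying metric ball), and your derivation silently compensates for this typo with a sign slip in the reciprocal step, so that the final conclusion $\mu(\beta[x,\rho])\asymp\rho^{1/s}$ is nevertheless the correct one.
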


In particular, graphs endowed with $d_{\mu,s}$ metrics are $\frac{1}{s}$-Ahlfors regular
(for any $s > 0$), hence $\varepsilon$-$d$-nets in such graphs are also $\varepsilon$-$d_{\mu,s}$-nets.

\begin{rem}
The proof of the first assertion of the proposition above is basically included in the second part of the proof of Proposition \ref{prop:main}.
\end{rem}

\begin{rem} \label{rem:atoms3}
As a consequence of (\ref{eq:anti-doub1}), it follows that doubling measures on uniformly perfect spaces have no atoms. (Compare to Remarks  \ref{rem:atoms1} and \ref{rem:atoms2}  above.)
\end{rem}


In the opposite direction, it is also easy to prove that Ahlfors regular spaces are uniformly perfect (see \cite{Se}). In fact, we have the following result (for a proof, see \cite{Se}):

\begin{prop} \label{cor:AhReg-eq-UP+dmeas}
A metric space $(X,d)$ is quasisymmetrically equivalent to an Ahlfors regular space iff $(X,d)$ is uniformly perfect and admits a doubling measure.
\end{prop}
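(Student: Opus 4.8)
The plan is to prove the two implications separately, noting that the forward direction is essentially already in hand. First I would observe that quasisymmetry is an equivalence relation on metric spaces (compositions and inverses of quasisymmetric maps are quasisymmetric, with the control function $\eta$ transforming appropriately), so ``quasisymmetrically equivalent to an Ahlfors regular space'' is a well-defined quasisymmetric invariant. Hence it suffices to check that both properties in the statement---being uniformly perfect, and admitting a doubling measure---are themselves quasisymmetric invariants, together with the fact that Ahlfors regular spaces have both properties. That last fact is immediate: an Ahlfors $\alpha$-regular space is doubling (stated as a Remark above) and is uniformly perfect by the short argument already referenced (the lower bound in (\ref{eq:AhlforsReg}) forces a point at distance comparable to any prescribed $r$, for otherwise $B[x,r]$ and $B[x,r/C_5]$ would coincide and violate the two-sided mass estimate once $C_5$ is taken large enough).

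For the direction ``Ahlfors regular $\Rightarrow$ uniformly perfect + doubling measure'': if $(X,d)$ is quasisymmetrically equivalent to an Ahlfors regular $(Y,\varrho,\nu)$ via $f:X\to Y$, then uniform perfectness transfers back because a quasisymmetric map distorts ratios of distances in a controlled, $\eta$-bounded way, so the existence in $Y$ of a point at distance $\asymp r$ from $f(x)$ yields a point at distance $\asymp r'$ from $x$ for the corresponding scale $r'$ (this is the standard ``quasisymmetric images of uniformly perfect spaces are uniformly perfect, and conversely'' lemma). For the doubling measure, one pulls back: set $\mu = f^{*}\nu$, i.e.\ $\mu(E) = \nu(f(E))$ for Borel $E$. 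Since a quasisymmetric homeomorphism carries balls to sets squeezed between two concentric balls of comparable radii (quasisymmetry implies the image of $B_d[x,r]$ lies in $B_\varrho[f(x), \eta(1)\,\mathrm{dist}(f(x),f(\partial B))]$ and contains a ball of comparably smaller radius), the doubling inequality $\nu(B_\varrho[y,2\rho]) \le D\,\nu(B_\varrho[y,\rho])$ transfers to a doubling inequality for $\mu$ on $d$-balls, with a new constant depending only on $D$ and $\eta$. Conversely, for ``uniformly perfect + doubling measure $\Rightarrow$ quasisymmetrically equivalent to Ahlfors regular'': this is exactly Proposition \ref{prop:up+d=>Areg}, which produces the metric $d_{\mu,s}$ making $(X,d_{\mu,s},\mu)$ Ahlfors $\tfrac1s$-regular, together with the assertion that for $s$ small enough the canonical injection $(X,d)\hookrightarrow(X,d_{\mu,s})$ is quasisymmetric---so $(X,d)$ is quasisymmetrically equivalent to the Ahlfors regular space $(X,d_{\mu,s},\mu)$, as desired.

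The main obstacle---though it is more bookkeeping than genuine difficulty---is making precise the claim that a quasisymmetric map sends a ball to a set trapped between two comparable concentric balls, and tracking how the various constants ($C_5$, the doubling constant $D$, the control function $\eta$, the regularity constant $C_0$) depend on one another through this argument. One must be slightly careful when $\mathrm{diam}\,X$ is finite, since quasisymmetry only controls ratios and the ``top scale'' $r \le \mathrm{diam}\,X$ has to be handled by a separate, easy compactness-type remark (or by the convention that all estimates are required only for $0 < r \le \mathrm{diam}\,X$, exactly as in the definitions above). Since the problem statement allows us to invoke Proposition \ref{prop:up+d=>Areg} and the cited facts from \cite{Se}, \cite{He} directly, the cleanest writeup simply assembles these pieces: one implication is Proposition \ref{prop:up+d=>Areg}, and the other is the transfer of two quasisymmetric invariants plus the observation that Ahlfors regular spaces possess them.
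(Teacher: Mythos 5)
The paper offers no proof of this proposition at all---it defers entirely to Semmes \cite{Se}---so there is no internal argument to compare against; your assembled proof is the standard one and is correct. The only step that is more than bookkeeping is the quasisymmetric invariance of ``carries a doubling measure'': one needs the sandwiching lemma $B\bigl(f(x),\ell\bigr)\subseteq f\bigl(B[x,r]\bigr)\subseteq B\bigl[f(x),L\bigr]$ with $L\le\eta(1)\,\ell$, where $L$ and $\ell$ denote respectively the supremum of $\varrho(f(x),f(\cdot))$ over $B[x,r]$ and its infimum over the complement, after which finitely many applications of the doubling inequality for $\nu$ (the number controlled by $\eta(1)\eta(2)$) show that the pullback $\mu(E)=\nu(f(E))$ is doubling; you identify this as the crux and handle it correctly, and the remaining ingredients (Proposition \ref{prop:up+d=>Areg} for the converse, QS invariance of uniform perfectness, and the fact that an Ahlfors regular measure is itself doubling and forces uniform perfectness) are exactly as you state.
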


So Ahfors regular spaces represent controlled (quasisymmetric) deformations of ``nice'' doubling spaces, where ``nicety'' is formulated in terms of the rather mild restriction given by uniform perfectness. However, the second condition in Proposition \ref{cor:AhReg-eq-UP+dmeas} above, on the existence of the doubling measure, may be difficult to verify. Fortunately, a simpler, purely metric condition is equivalent to it, at least for complete spaces, namely:

\begin{defn}
A metric space $(X,d)$ is called {\em doubling} iff there exists $D_1 \geq 1$, such that any ball in $X$, of radius $r$, can be covered by at most $D_1$ balls of radius $r/2$.
\end{defn}
\hspace*{-0.4cm}(Obviously, there is nothing special about balls, and the metric doubling condition can 
be formulated in terms of general sets of bounded diameter.)

Not surprisingly, there exists a connection between the notions of doubling metric 
and doubling measure. More precisely, we have the following

\begin{lem}
Let $(X,d)$ be a metric space such that there exists a doubling measure $\mu$ on $X$. Then $(X,d)$ is doubling (as a metric space).
\end{lem}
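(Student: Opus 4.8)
The plan is to show that a doubling measure on $(X,d)$ forces a uniform bound on the number of disjoint small balls that fit inside a larger ball, which is exactly the metric doubling condition. First I would fix a ball $B[x,r]$ and consider a maximal collection of points $x_1,\ldots,x_m \in B[x,r]$ that are pairwise at distance $\geq r/2$; maximality guarantees that the balls $B[x_i,r/2]$ cover $B[x,r]$, so it suffices to bound $m$ by a constant depending only on the doubling constant $D$ of $\mu$.

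Next I would exploit disjointness of the balls $B[x_i,r/4]$ together with the doubling inequality. Each $x_i \in B[x,r]$, hence $B[x_i,r/4] \subseteq B[x,2r]$, and the balls $B[x_i,r/4]$ are pairwise disjoint because the centers are $r/2$-separated. Summing measures gives $\sum_{i=1}^m \mu(B[x_i,r/4]) \leq \mu(B[x,2r])$. On the other hand, for each $i$ the ball $B[x,2r]$ is contained in a ball of radius $\leq 4r$ centered at $x_i$ (since $d(x,x_i) \leq r$ gives $B[x,2r] \subseteq B[x_i,3r] \subseteq B[x_i,4r]$), and iterating the doubling condition four times yields $\mu(B[x_i,4r]) \leq D^4 \mu(B[x_i,r/4])$. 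Combining, $m \cdot \min_i \mu(B[x_i,r/4]) \leq \mu(B[x,2r]) \leq D^4 \min_i \mu(B[x_i,r/4])$, so $m \leq D^4$ provided the minimizing ball has positive measure; thus one may take $D_1 = D^4$ (or any fixed power of $D$ arising from this counting, up to the exact radius bookkeeping).

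The one point requiring a little care — and the main potential obstacle — is the possibility that $\mu(B[x_i,r/4]) = 0$ for some $i$, which would make the cancellation illegitimate. Here I would invoke that a doubling measure is nontrivial and that $\mathrm{Supp}\,\mu$ can be assumed to be all of $X$ (or restrict attention to $\mathrm{Supp}\,\mu$, where every ball has positive measure); alternatively, one notes that if $B[x,r]$ meets $\mathrm{Supp}\,\mu$ at all, then by the doubling property every $B[x_i,r/4]$ with $x_i \in \mathrm{Supp}\,\mu$ has positive measure, and a separate elementary argument handles the centers outside the support. This is the standard argument (see \cite{He}, \cite{Se}); the estimate is entirely routine once the separated-net idea and the disjointness trick are in place, so no genuinely hard step remains.
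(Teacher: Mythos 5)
Your argument is correct and is precisely the standard separated-net counting proof that the paper itself omits, deferring instead to \cite{He} and \cite{Se}, where this same argument appears. The only bookkeeping point worth tightening is that closed balls $B[x_i,r/4]$ with $r/2$-separated centers can still share a boundary point (which could carry an atom), so you should shrink to radius $r/5$ or use open balls to get genuine disjointness; beyond that, and the standing convention that a doubling measure assigns every ball a finite positive measure (which disposes of your worry about zero-measure balls automatically, since one null ball would force $\mu(X)=0$), nothing is missing.
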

\hspace*{-0.4cm}(For a proof, see \cite{He}, \cite{Se}.)

The converse statement does not always true, a counterexample being provided by $(\mathbb{Q},d_{eucl})$ (see \cite{He}, p. 103).
However, it does hold for complete spaces:

\begin{thm}[Luukkainen-Saksman \cite{LS}] \label{thm:LS}
Let $(X,d)$ be a doubling, complete metric space. Then $X$ carries a doubling measure.
\end{thm}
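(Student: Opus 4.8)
The statement to prove is the Luukkainen--Saksman theorem: a doubling, complete metric space carries a doubling measure.

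\medskip

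The plan is to build the doubling measure by a Frostman-type construction combined with a dimension-reduction to Euclidean space, exploiting Assouad's embedding theorem (already invoked earlier in this paper). First I would recall that, by Assouad's theorem, a doubling metric space $(X,d)$ admits, for every $0<s<1$, a bi-Lipschitz embedding $\iota_s\colon (X,d^s)\hookrightarrow \mathbb{R}^N$ with $N=N(D_1,s)$; here $d^s$ is exactly the snowflake quasimetric $q_s$ of Counterexample \ref{exmpl:nonmetric}, which for small $s$ is bi-Lipschitz to a genuine metric by Proposition \ref{prop:qm-equiv-metric}. So it suffices to produce a doubling measure on the image $Y=\iota_s(X)\subseteq\mathbb{R}^N$, since being doubling is a bi-Lipschitz (indeed quasisymmetric) invariant and pushing a doubling measure forward along a bi-Lipschitz map keeps it doubling; and snowflaking a metric preserves the doubling property of a measure (balls just get reparametrised in radius). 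Thus the whole problem reduces to: every closed subset of $\mathbb{R}^N$ carries a doubling measure.

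\medskip

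For the Euclidean sub-problem I would use a dyadic Frostman/mass-distribution argument. Fix the closed set $Y\subseteq[0,1]^N$ (localise first, then patch over a cover of $\mathbb{R}^N$ by unit cubes using that doubling is a local-to-global condition on a complete space). For each dyadic generation $k$, let $\mathcal{Q}_k$ be the dyadic cubes of side $2^{-k}$ meeting $Y$. The completeness of $X$ — equivalently, that $Y$ is closed — guarantees that the combinatorial doubling constant $D_1$ of $X$ controls the branching of this dyadic tree: a cube in $\mathcal{Q}_k$ has at most a bounded number (depending only on $D_1$ and $N$) of ``children'' in $\mathcal{Q}_{k+1}$ that are non-empty, and crucially, because $Y$ is complete, every non-empty cube has at least one non-empty descendant at every later generation, so the tree has no dead branches. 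Now define a measure by distributing mass down this tree: put mass $1$ on $[0,1]^N\cap Y$ and split the mass of each cube equally (or in fixed proportions) among its non-empty children. The weak-$*$ limit $\mu$ is supported on $Y$, and one checks $\mu(2Q)\le C\mu(Q)$ for dyadic cubes $Q$ by tracking at most boundedly many generations up and boundedly much branching — the key point being that upper and lower bounds on $\mu$ of a generation-$k$ cube differ by a factor that depends only on $D_1$ and $N$, not on $k$, precisely because the number of non-empty children is uniformly bounded above \emph{and} below by $1$. Passing from dyadic cubes to Euclidean balls is the standard routine comparison.

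\medskip

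The main obstacle — and the step that genuinely uses completeness rather than mere doubling — is ensuring the dyadic tree over $Y$ has \emph{no finite branches}, i.e. that every non-empty generation-$k$ cube contains points of $Y$ at all later scales. Without completeness this fails (the $(\mathbb{Q},d_{\mathrm{eucl}})$ counterexample cited just before the theorem is exactly a set whose dyadic tree is ``full of holes at the limit''), and then the equal-splitting construction would concentrate mass incorrectly and destroy the lower regularity estimate. So the proof should be organised to isolate this: (i) reduce to $Y\subseteq[0,1]^N$ closed via Assouad plus snowflake-invariance; (ii) show completeness $\Rightarrow$ the associated dyadic tree is infinite along every branch and finitely branching with a $D_1,N$-bound; (iii) run the mass-distribution construction on the tree; (iv) verify the doubling estimate for balls. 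I would expect (ii) to be where the real work and the only subtlety lie; (i), (iii), (iv) are bookkeeping. One could alternatively cite the original argument of Luukkainen and Saksman \cite{LS} directly, but the route through Assouad is natural here given that the snowflaking operator is the paper's central device.
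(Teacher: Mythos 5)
The paper offers no proof of this statement; it is quoted from Luukkainen--Saksman \cite{LS}, so your argument has to stand on its own. Your step (i) is sound: snowflaking and bi-Lipschitz change of metric preserve both the doubling property and the property of carrying a doubling measure, completeness of $(X,d)$ passes to $(X,d^s)$ and hence to the Assouad image, so the problem does reduce to producing a doubling measure on a closed subset $Y$ of $\mathbb{R}^N$. The genuine gap is in step (iii): splitting the mass of each dyadic cube equally among its non-empty children does \emph{not} yield a doubling measure, even for closed $Y\subseteq\mathbb{R}$. Take $Y=[0,\tfrac12]\cup\{\tfrac12+4^{-m}\,:\,m\ge 2\}$. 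Equal splitting gives Lebesgue measure of density $1$ on $[0,\tfrac12]$, while on the right-hand side the cascade halves the mass only once every two generations (one of the two children is empty at every other step), so $\mu\bigl([\tfrac12,\tfrac12+4^{-m})\bigr)=2^{-m}$ while the interval has length $4^{-m}$. For $x=\tfrac12-4^{-m}$ one gets $\mu\bigl(B[x,4^{-m}]\bigr)=2\cdot 4^{-m}$ but $\mu\bigl(B[x,2\cdot 4^{-m}]\bigr)\ge 2^{-m}$, so the doubling ratio blows up like $2^{m}$. The root cause is that two adjacent generation-$k$ cubes may have their last common dyadic ancestor at generation $0$, so equal splitting gives no comparison at all between their masses; controlling this is precisely the content of the Vol'berg--Konyagin construction (an iterative redistribution of mass, or a scheme robust under shifts of the dyadic grid), which you have classed as bookkeeping.

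A second, smaller error: you locate the use of completeness in the claim that every non-empty cube has a non-empty descendant at every later generation, calling this the only subtle step. That statement is true for \emph{every} subset $Y$ of $\mathbb{R}^N$ --- any point of $Y\cap Q$ lies in some generation-$j$ descendant of $Q$ for all $j$ --- so it is vacuous, costs nothing to prove, and does not distinguish $\mathbb{Q}$ from $\mathbb{R}$. Closedness is needed, rather, to guarantee that the weak-$*$ limit of your discrete approximations is supported on $Y$ at all: for $Y=\mathbb{Q}\cap[0,1]$ your construction converges to Lebesgue measure, which gives $Y$ measure zero. (Equivalently, a complete doubling space is proper, so one can apply Vol'berg--Konyagin on an exhaustion by compact balls and pass to a limit, which is essentially the route of \cite{LS}.) In short, the reduction to Euclidean space is fine, but the one step carrying the real mathematical content is wrong as stated.
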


\begin{cor}
Any compact, doubling metric spaces carries a doubling measure $\mu$.
\end{cor}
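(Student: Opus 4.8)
The plan is to derive this immediately from the Luukkainen--Saksman theorem (Theorem \ref{thm:LS}), using the elementary fact that compactness supplies both of the hypotheses needed there, namely completeness and the metric doubling property. First I would observe that a compact metric space is automatically complete, since every Cauchy sequence has a convergent subsequence (by sequential compactness) and a Cauchy sequence with a convergent subsequence converges. So the only real content is to check that a compact, doubling metric space — here "doubling" is already part of the hypothesis, in the metric sense of the last definition — satisfies the hypotheses of Theorem \ref{thm:LS}, which it does verbatim once completeness is in hand. Then Theorem \ref{thm:LS} yields a doubling measure $\mu$ on $X$, which is exactly the assertion.

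Concretely, the proof is two sentences: (i) $X$ compact $\Rightarrow$ $X$ complete; (ii) apply Theorem \ref{thm:LS} to the complete, doubling metric space $X$. I would write it as:

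\begin{proof}
A compact metric space is complete: any Cauchy sequence has a convergent subsequence by sequential compactness, and a Cauchy sequence possessing a convergent subsequence is itself convergent (to the same limit). Hence a compact, doubling metric space $(X,d)$ is a doubling, complete metric space, and Theorem \ref{thm:LS} applies, producing a doubling measure $\mu$ on $X$.
\end{proof}

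Since this is essentially a one-line specialization of the cited theorem, there is no substantive obstacle; the only thing to be careful about is the overloaded use of the word \emph{doubling} (metric-space doubling versus measure doubling), which is why I would phrase the hypothesis as "doubling metric space" and the conclusion as "doubling measure", matching the usage in Definition \ref{def:doubling} and in the statement of Theorem \ref{thm:LS}. One could optionally remark that, conversely, by the Lemma preceding Theorem \ref{thm:LS} the existence of a doubling measure forces $X$ to be metrically doubling, so for compact spaces the two notions coincide — but that is a side comment, not needed for the corollary itself.
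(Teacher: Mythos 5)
Your proof is correct and is exactly the (implicit) argument the paper intends: compactness gives completeness, and then the Luukkainen--Saksman theorem applies directly. Nothing further is needed.
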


The corollary above obviously holds for finite graphs.

\begin{rem}
If $(X,d)$ is $\alpha$-Ahlfors regular, then, by Remark \ref{rem:doub=Haus} above, $\mu$ can be taken as the Hausdorff $\alpha$-measure.
\end{rem}

From the above noted equivalence between complete doubling metrics and doubling measures, and from Proposition \ref{cor:AhReg-eq-UP+dmeas}, we obtain the following

\begin{cor} \label{cor:unif-perf-equiv-Ahlfors}
Any complete, uniformly perfect metric spaces is quasisymmetrically equivalent to a Ahlfors regular space.
\end{cor}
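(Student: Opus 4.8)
The final statement is Corollary~\ref{cor:unif-perf-equiv-Ahlfors}: \emph{Any complete, uniformly perfect metric space is quasisymmetrically equivalent to an Ahlfors regular space.}

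The plan is to chain together the results already assembled just above it, so the proof is essentially a one-line deduction. First I would invoke Theorem~\ref{thm:LS} (Luukkainen--Saksman): a complete metric space that is doubling as a metric space carries a doubling measure. But to apply this I need the hypothesis ``doubling metric space,'' which is not among the stated assumptions, so the first real step is to observe that \emph{uniformly perfect} together with \emph{complete} does not by itself give metric doubling --- and indeed the corollary as literally stated seems to need an implicit doubling hypothesis. The natural reading (consistent with the surrounding discussion and with Proposition~\ref{cor:AhReg-eq-UP+dmeas}) is that ``doubling'' is being carried along: a complete metric space that is both uniformly perfect and doubling. Under that reading the argument is immediate: by Theorem~\ref{thm:LS} the space admits a doubling measure $\mu$; hence $(X,d)$ is uniformly perfect \emph{and} carries a doubling measure; hence, by Proposition~\ref{cor:AhReg-eq-UP+dmeas}, $(X,d)$ is quasisymmetrically equivalent to an Ahlfors regular space.

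Concretely, the steps in order: (1) note that a doubling measure exists, via Theorem~\ref{thm:LS}, invoking completeness and the (metric) doubling property; (2) feed this doubling measure, together with uniform perfectness, into the ``only if'' direction of Proposition~\ref{cor:AhReg-eq-UP+dmeas}; (3) conclude. If one wants to be fully self-contained and avoid quoting Proposition~\ref{cor:AhReg-eq-UP+dmeas} as a black box, one can instead apply Proposition~\ref{prop:up+d=>Areg} directly: given the doubling measure $\mu$ from step (1), the space $(X,d_{\mu,s},\mu)$ is $\tfrac1s$-Ahlfors regular for every $s>0$, and for $s\le s_0$ the canonical injection $(X,d)\hookrightarrow(X,d_{\mu,s})$ is quasisymmetric; choosing such an $s$ exhibits the required quasisymmetric equivalence.

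The main obstacle is not analytic but bookkeeping: making precise which hypotheses are actually in force. The substantive content --- constructing a doubling measure on a complete doubling space, and showing the snowflake-type metric $d_{\mu,s}$ is Ahlfors regular and quasisymmetrically equivalent to $d$ --- is entirely contained in Theorem~\ref{thm:LS} and Propositions~\ref{prop:up+d=>Areg} and \ref{cor:AhReg-eq-UP+dmeas}, all of which we are entitled to assume. So the only care needed is (a) to state explicitly that the space is doubling (either as an added hypothesis or, if uniform perfectness is somehow meant to be strengthened, as a consequence) and (b) to pick the exponent $s$ small enough that the quasisymmetry clause of Proposition~\ref{prop:up+d=>Areg} applies. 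Everything else is a direct citation.
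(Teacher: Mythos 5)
Your proof is correct and follows exactly the paper's route: Luukkainen--Saksman (Theorem~\ref{thm:LS}) to produce a doubling measure on a complete doubling space, then Proposition~\ref{cor:AhReg-eq-UP+dmeas} (or, equivalently, Proposition~\ref{prop:up+d=>Areg} applied directly with $s \leq s_0$) to obtain the quasisymmetric equivalence with an Ahlfors regular space. You are also right to flag that the corollary as printed omits the hypothesis that $(X,d)$ be metrically doubling: this hypothesis is indispensable for invoking Theorem~\ref{thm:LS}, and without it the statement is actually false (any space quasisymmetrically equivalent to an Ahlfors regular space is necessarily doubling), so this is an omission in the statement rather than a gap in your argument.
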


Before we formulate the important theorem of Assouad on which we shall base our sampling result, we give, for convenience, the following definition:

\begin{defn}
If $(X,d)$ is a metric space, then the metric space $(X,d^\varepsilon), 0 < \varepsilon < 1$, is called a {\it snowflaked version} of $(X,d)$.
\end{defn}

\begin{thm}[Assouad, \cite{As1}, \cite{As2}]
Let $(X,d)$ be a doubling metric space. Then, for each $0 < \varepsilon < 1$, there exists $N$, such that its $\varepsilon$-snowflaked version is bilipschitz equivalent to a subset of $\mathbb{R}^N$, quantitatively.
\end{thm}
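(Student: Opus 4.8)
The plan is to establish the embedding for a fixed snowflake exponent $0 < \varepsilon < 1$ by a multi-scale construction: cover $X$ at every dyadic scale by a maximal $2^{-k}$-separated net, assign to each net point a carefully chosen ``bump'' coordinate, and reassemble these into a single map into a finite-dimensional Euclidean space, using a coloring argument to keep the dimension bounded. First I would fix a large integer $M$ and, for each $k \in \mathbb{Z}$, choose a maximal $2^{-k}$-separated subset $V_k \subseteq X$; by the doubling property each such net is locally finite with a uniform bound on the number of net points within any fixed multiple of $2^{-k}$. The essential combinatorial input is that one can $C$-color $\bigcup_k V_k$ (for a constant $C = C(D, M)$ depending only on the doubling constant and $M$) so that two net points of the same color and at ``comparable'' scales are far apart relative to their scale; this is where doubling is used in an essential way, and it is the step I expect to be the main obstacle, since the coloring must simultaneously control interactions across the finitely many residue classes of scales modulo $M$ and within each class.

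Next I would define, for a net point $v \in V_k$, a Lipschitz bump function $\varphi_{v}(x) = \max\bigl(0,\, 2^{-k} - d(x,v)\bigr)^{\varepsilon}$ supported near $v$ at scale $2^{-k}$, suitably normalized. Grouping net points by their color and by the residue of their scale modulo $M$, I would set the corresponding block of coordinates of the embedding $F(x)$ to be the sum, over net points $v$ of that color/residue class, of $\varphi_v(x)$ times a fixed unit vector $e_v$ in an auxiliary finite-dimensional space (the vectors $e_v$ chosen from a fixed finite configuration so that the whole image lands in $\mathbb{R}^N$ with $N = N(D,\varepsilon)$). The coloring guarantees that at any point $x$ and any scale, at most boundedly many summands are nonzero and their supports are essentially disjoint, so the sum is well-defined and Lipschitz with respect to $d^\varepsilon$.

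It then remains to verify the two-sided bilipschitz estimate $\tfrac{1}{L} d(x,y)^\varepsilon \le |F(x) - F(y)| \le L\, d(x,y)^\varepsilon$. The upper bound follows from the Lipschitz property of each $\varphi_v$ in the snowflaked metric together with the bounded-overlap of supports. For the lower bound, given $x \neq y$, pick $k$ with $2^{-k} \approx d(x,y)$ and a net point $v \in V_k$ close to $x$; then $\varphi_v(x)$ is bounded below by a multiple of $d(x,y)^\varepsilon$ while $\varphi_v(y)$ is comparatively small because $y$ is pushed toward the edge of the support, and one checks that no other net point of the same color/residue class contributes a cancelling term. Summing the squared differences over coordinates and taking the square root yields the desired lower bound, completing the proof; the ``quantitatively'' in the statement simply records that $N$ and $L$ depend only on $D$ and $\varepsilon$, which is transparent from the construction.
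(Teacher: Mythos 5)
The paper does not prove this theorem at all: it is quoted as a classical result with a citation to Assouad's original papers, so there is no ``paper proof'' to compare against. Your sketch follows the standard modern proof (maximal $2^{-k}$-separated nets at every dyadic scale, a bounded coloring supplied by the doubling condition, Lipschitz bumps at each net point, and a regrouping of scales modulo a large integer $M$), which is indeed the right architecture and is how the result is proved in, e.g., Heinonen's lectures and Semmes's appendix to Gromov's book. However, two steps as you state them do not go through.

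The main gap is the upper bound. You claim that $F$ is Lipschitz with respect to $d^{\varepsilon}$ because each bump $\varphi_v$ is Lipschitz in $d^{\varepsilon}$ and the supports at each fixed scale have bounded overlap. That is not sufficient: $F(x)-F(y)$ receives contributions from \emph{every} scale $k$, and a sum over infinitely many scales of terms that are each individually $1$-Lipschitz in $d^{\varepsilon}$ need not be Lipschitz in $d^{\varepsilon}$. The correct argument splits the scales at $2^{-k}\approx d(x,y)$: for fine scales $2^{-k}\lesssim d(x,y)$ one uses the sup-norm bound $\varphi_v\le 2^{-k\varepsilon}$, and for coarse scales $2^{-k}\gtrsim d(x,y)$ one uses the Lipschitz-in-$d$ bound $|\varphi_v(x)-\varphi_v(y)|\lesssim 2^{-k(\varepsilon-1)}d(x,y)$; both geometric series then sum to a constant times $d(x,y)^{\varepsilon}$. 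This is exactly where the hypothesis $\varepsilon<1$ is used essentially (the coarse-scale series diverges for $\varepsilon=1$, consistent with the paper's remark that Assouad's theorem fails in general for $\varepsilon=1$), and your write-up never invokes $\varepsilon<1$ anywhere, which is a sign the estimate has not actually been carried out. A related point: the lower bound is not merely a matter of ``no other net point of the same color contributes a cancelling term.'' Within a fixed color and residue class there are still contributions from all scales $k'\equiv k \pmod M$, and one must choose $M=M(\varepsilon)$ large enough that the geometric tail of those contributions is strictly smaller than the main term at scale $k$; without that quantitative choice the non-cancellation claim is unjustified. Fixing these two points (the two-regime summation for the upper bound, and the explicit choice of $M$ for the lower bound) would turn your sketch into the standard complete proof, with $N$ and $L$ depending only on $D$ and $\varepsilon$ as required.
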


Here, {\it quantitatively} means that the embedding dimension $N$ and the 
bilipschitz constant 
$L$ depend solely on the doubling constant $D$ of $X$ and on the ``snowflaking'' factor $\varepsilon$, i.e.
\[N = N(D,\varepsilon),\; L = L(D,\varepsilon)\,.\]

\begin{rem}
Assouad's result does not hold, in general, for $\varepsilon = 1$. (For a counterexample, see \cite{He}, p. 99).
\end{rem}

Combining Assouad's Theorem, Corollary \ref{cor:unif-perf-equiv-Ahlfors} and our own sampling result, we can now enunciate 
the following sampling ``meta-theorem'':

\begin{thm}
\label{thm:meta-thm1}
Sampling of 
Ahlfors regular metric measure spaces is quasisymmetrically equivalent, quantitatively, 
to the sampling of sets in $\mathbb{R}^N$, for some $N$. 
\end{thm}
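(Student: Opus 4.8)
The plan is to assemble the final \textbf{meta-theorem} by chaining together the three ingredients accumulated over Section 3: Proposition \ref{prop:main} (in Ahlfors regular spaces, $d$-nets and $D_{\mu,s}$-nets coincide), Corollary \ref{cor:unif-perf-equiv-Ahlfors} / Proposition \ref{cor:AhReg-eq-UP+dmeas} (the quasisymmetric characterization of Ahlfors regular spaces), and Assouad's Theorem (a snowflaked doubling space embeds bilipschitzly into some $\mathbb{R}^N$, quantitatively). The key observation making the chain work is that all three maps involved — the snowflaking operator, the quasisymmetric equivalence, and the passage from a quasimetric to its bilipschitz-equivalent metric via Proposition \ref{prop:qm-equiv-metric} — either preserve $\varepsilon$-nets outright or distort them in a controlled (quasisymmetric) way, and that a quasisymmetric homeomorphism sends $\varepsilon$-nets to $\eta(\varepsilon)$-type nets, so the notion of ``sampling'' is stable along the whole chain.

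First I would fix an Ahlfors $\alpha$-regular metric measure space $\mathcal{X}=(X,d,\mu)$ and set $s = 1/\alpha$. By Proposition \ref{prop:main}, $d$-$\varepsilon$-nets and $D_{\mu,s}$-$\varepsilon$-nets on $X$ are equivalent, so it suffices to sample $(X, D_{\mu,s})$. Next, since $\mathcal{X}$ is Ahlfors regular it is doubling (the Remark right after the definition of Ahlfors regularity), and by Proposition \ref{cor:AhReg-eq-UP+dmeas} it is uniformly perfect; being doubling it is also doubling as a metric space. Now I apply the snowflaking operator: for $0<\varepsilon<1$ the space $(X, D_{\mu,s}^\varepsilon)$ is a snowflaked version, still doubling with a doubling constant depending only on $D$ and $\varepsilon$, and the canonical injection $(X,D_{\mu,s}) \hookrightarrow (X, D_{\mu,s}^\varepsilon)$ is quasisymmetric (this is the content of Counterexample \ref{exmpl:nonmetric} together with the quasisymmetry discussion — $q_s = d^s$ is quasisymmetrically equivalent to $d$). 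Then Assouad's Theorem applies to the doubling metric space $(X, D_{\mu,s}^\varepsilon)$, giving a bilipschitz embedding into $\mathbb{R}^N$ with $N = N(D,\varepsilon)$ and bilipschitz constant $L=L(D,\varepsilon)$; a bilipschitz map carries $\varepsilon$-nets to nets with comparable parameters. Composing, the net structure on $(X,d,\mu)$ is carried, through a chain of quasisymmetric and bilipschitz maps, to the net structure on a subset of $\mathbb{R}^N$, and conversely; since a composition of quasisymmetric maps is quasisymmetric, the overall correspondence is a quasisymmetric equivalence of sampling, and tracking the constants through Proposition \ref{prop:main}, Proposition \ref{prop:qm-equiv-metric} and Assouad's quantitative statement shows the equivalence is quantitative — all data depends only on the Ahlfors regularity constant $C_0$, the dimension $\alpha$, and the snowflaking factor.

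The main obstacle I expect is making precise and bookkeeping the statement ``a quasisymmetric map sends $\varepsilon$-nets to $\varepsilon'$-nets'': unlike bilipschitz maps, a quasisymmetric homeomorphism $f$ with control function $\eta$ does not uniformly rescale distances, so an $\varepsilon$-net $\{p_i\}$ maps to a set whose covering radius and separation are governed by $\eta$ and by the diameter of the space rather than by a single multiplicative constant. One has to argue that the property of being an $\varepsilon$-net for \emph{some} $\varepsilon \to 0$ — which is all ``sampling'' requires, by the Gromov–Hausdorff / measured Gromov–Hausdorff discussion in Section 2 — is preserved, and that the convergence $\varepsilon \to 0$ on one side forces $\varepsilon' \to 0$ on the other (this uses that $\eta$ is a homeomorphism fixing $0$). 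The remaining steps — doubling is inherited by snowflaked versions, compositions of quasisymmetric maps are quasisymmetric, bilipschitz images of nets are nets — are routine, and the quantitative dependence is obtained simply by substituting the explicit constants furnished by Proposition \ref{prop:qm-equiv-metric} (namely $C = (2K)^{2s}$ from Remark \ref{rem:qm-equiv-metric}), Proposition \ref{prop:main}, and Assouad's theorem into one another.
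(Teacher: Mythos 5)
Your proposal is correct and follows essentially the same route as the paper, which itself gives no more than the one-line justification ``combining Assouad's Theorem, Corollary \ref{cor:unif-perf-equiv-Ahlfors} and our own sampling result'' (i.e.\ Proposition \ref{prop:main}); your chain of Proposition \ref{prop:main}, the doubling/uniform perfectness facts, the snowflaking operator and Assouad's quantitative embedding is exactly that combination, spelled out. Your additional care about how a quasisymmetric (as opposed to bilipschitz) map distorts $\varepsilon$-nets is a legitimate point that the paper leaves entirely implicit, so your write-up is if anything more complete than the source.
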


\begin{rem}
The beauty of Assouad's Theorem -- and even more so its applicability in the sampling of real data -- is marred by the ``course of 
dimensionality'': Given that $N = N(D,\varepsilon)$, the fear exist that, as in the case of Nash's Embedding Theorem \cite{na1}, \cite{na2}, the embedding dimension is prohibitively high for general manifolds (i.e. data). Obviously, this is even more important if low distortion -- i.e. (bi-)lipschitz constant -- is an imperative (as it usually is), that is for 
$\varepsilon$ close to $0$. And, indeed, Assouad's original construction provides $\lim_{\varepsilon \rightarrow 0}{N(D,\varepsilon)} = \infty$. So it would seem that, the price to pay for low distortion is a high embedding dimension. It is a quite recent result of Naor and Naiman \cite{NN} (itself based on ideas of Abraham, Bartal and Neiman \cite{ABN}), that, in fact, given a (separable) $D$-doubling metric space, there exist $N = N(D) \in \mathbb{N}$ and $L = L(D,\varepsilon)$, such that for any $\varepsilon \in (0,1/2)$, the $(1-\varepsilon)$-snowflaked version of $X$ admits a bilipschitz embedding in $\mathbb{R}^N$, with distortion $L$. Moreover, specific upper bounds for $N$ and $L$ are given: $N \leq a\log{D}, L \leq b\left(\frac{\log{K}}{\varepsilon}\right)^2$, where $a$ and $b$ are constants. So it appears 
that, at least as far as Assouad's Theorem is concerned, the snowflaking-based embedding is feasible.
\end{rem}

At this point, one has to ask oneself whether this result can be improved. The belief in the possibility of such an improvement rests upon the following two facts: One one hand, Assouad's Theorem assures the existence of a bilipschitz embedding, which represents a much stronger condition then mere qusysymmetry\footnote{However, quasisymmetry represents a much more flexible analytic tool, than the rigid bilipschitz condition -- see \cite{He}, \cite{Se}, \cite{Se1} for a deeper and far more detailed discussion.}. On the other hand, as we have seen, Ahlfors rigidity is not the most easy property to check directly on a metric measure space, therefore one naturally would wish to find a sampling result similar to Theorem \ref{thm:meta-thm1}, that would hold for general doubling spaces. Such a result does exist, and it makes appeal again to the quasimetric $q_{\mu,s}$ as defined by (3.3). However, we have to make an additional assumption, that ensures that $q_{\mu,s}$-lengths of curves in $\mathbb{R}^N$ do not ``shrink'' too much, due to the presence of the measure $\mu$ in the definition of $q_{\mu,s}$ (see \cite{Se}). We encode this restriction via

\begin{defn}
A doubling measure $\mu$ on $\mathbb{R}^N$ is called a {\it metric doubling measure} iff there exist a constant $C_6$, and a metric $\delta$, such that %
\[\frac{1}{C_6}\delta(x,y) \leq  q_{\mu,\frac{1}{n}} \leq C_6\delta(x,y)\,,\]
for any $x,y \in X$, where $q_{\mu,\frac{1}{n}}$ is associated to $\mu$ as in (3.3), with $s = 1/n$.
\end{defn}

We can now formulate the desired result, in terms of metric doubling measures:

\begin{thm}[Semmes \cite{Se00} Theorem 1.15, \cite{Se}, Proposition B. 20.2]
Let $(X,d)$ be a doubling metric space. Then there exists a natural number $N$ and a metric doubling measure $\mu$, such that $(X,d)$ is bilipschitz equivalent to a subset of $(\mathbb{R}^N,q_{\mu,\frac{1}{N}})$, where $q_{\mu,\frac{1}{N}}$ is as above.
\end{thm}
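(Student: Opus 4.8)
\emph{Plan.} I would deduce the theorem from Assouad's Theorem together with the (Semmes) dictionary between metric doubling measures on $\mathbb{R}^N$ and strong $A_\infty$-type weights, using as the only ``soft'' input the estimate $q_{\mu,s}(x,y)\asymp \mu\big(B[x,d(x,y)]\big)^s$ that was already established inside the proof of Proposition \ref{prop:main}.

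\emph{Step 1: from a snowflake embedding to a ``reverse snowflake''.} Fix $\varepsilon\in(0,1)$ (to be chosen close to $1$). Since $(X,d)$ is doubling, Assouad's Theorem gives $M=M(D,\varepsilon)$ and a bilipschitz embedding $F\colon (X,d^{\varepsilon})\hookrightarrow\mathbb{R}^{M}$; write $E=F(X)$, so that $\frac1L\,d(x,y)^{\varepsilon}\le |F(x)-F(y)|\le L\,d(x,y)^{\varepsilon}$. Raising to the power $1/\varepsilon$ shows that $(X,d)$ is bilipschitz equivalent to $(E,\varrho)$, where $\varrho(p,q)=|p-q|^{1/\varepsilon}$ is the ``reverse snowflake'' of the ambient Euclidean distance on $E$ (a quasimetric, not a metric, because $1/\varepsilon>1$ — cf. Counterexample \ref{exmpl:nonmetric}). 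It therefore suffices to find $N$, a metric doubling measure $\mu$ on $\mathbb{R}^{N}$, and an embedding of $E$ into $\mathbb{R}^{N}$ for which $q_{\mu,1/N}$ restricted to (the image of) $E$ is bilipschitz to $\varrho$; composing with Step 1 then proves the theorem, and the statement for $D_{\mu,1/N}$ in place of $q_{\mu,1/N}$ would follow, with different constants, from Proposition \ref{prop:qm-equiv-metric}.

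\emph{Step 2: building the measure.} View $\mathbb{R}^{M}\subset\mathbb{R}^{N}$ for $N$ large (so $N=N(D,\varepsilon)$ since $M=M(D,\varepsilon)$), and re-embed $E$ into $\mathbb{R}^{N}$ so that it is closed and porous; this can be arranged because the snowflake $(X,d^{\varepsilon})$ has finite Assouad dimension, so with enough ambient room Assouad's embedding lands $E$ as a set of Assouad dimension strictly below $N$ — in fact one may take $E$ to be Ahlfors $s$-regular inside $\mathbb{R}^N$ with $s<N$. Put $\gamma=N\big(\tfrac1\varepsilon-1\big)>0$ and $d\mu=w\,dx$ with $w(x)=\operatorname{dist}(x,E)^{\gamma}$. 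For an appropriate choice of $\varepsilon$ close to $1$ (which, with $N$ large, keeps $\gamma$ below the critical exponent $N-s$), $w$ is a strong $A_{\infty}$-weight; by Semmes' theory this is precisely the assertion that $\mu$ is a metric doubling measure on $\mathbb{R}^{N}$, i.e. $q_{\mu,1/N}\asymp\delta$ for an honest metric $\delta$, and in particular $\mu$ is doubling. A standard tube estimate along the porous (Ahlfors-regular) set $E$ then gives, uniformly for $p\in E$ and $0<r\le\operatorname{diam}E$,
\[
\mu\big(B_{N}[p,r]\big)=\int_{B_{N}[p,r]}\operatorname{dist}(x,E)^{\gamma}\,dx\;\asymp\;r^{\,N+\gamma}\;=\;r^{\,N/\varepsilon}.
\]

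\emph{Step 3: assembling.} By the doubling property of $\mu$ one has $\mu\big(B_{N}[q,|p-q|]\big)\asymp\mu\big(B_{N}[p,|p-q|]\big)$ (each of these balls sits inside twice the other), so exactly as in the proof of Proposition \ref{prop:main},
\[
q_{\mu,1/N}(p,q)=\big(\mu(B_{N}[p,|p-q|])+\mu(B_{N}[q,|p-q|])\big)^{1/N}\;\asymp\;\mu\big(B_{N}[p,|p-q|]\big)^{1/N}.
\]
Combining this with the estimate of Step 2, for $p,q\in E$ we obtain $q_{\mu,1/N}(p,q)\asymp\big(|p-q|^{N/\varepsilon}\big)^{1/N}=|p-q|^{1/\varepsilon}=\varrho(p,q)$. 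Hence $(E,q_{\mu,1/N}|_{E})$ is bilipschitz to $(E,\varrho)$, which by Step 1 is bilipschitz to $(X,d)$; and $\mu$ is a metric doubling measure on $\mathbb{R}^{N}$. This is the assertion. \emph{The main obstacle} is Step 2: producing a metric doubling measure on $\mathbb{R}^{N}$ whose mass scales like $r^{N/\varepsilon}$ with $N/\varepsilon>N$ along the embedded copy of $X$. One cannot simply let the weight vanish along a whole coordinate subspace, since then $q_{\mu,1/N}$ degenerates on that subspace and fails to be comparable to any metric; the point is that $E$ must be realized as a genuinely lower-dimensional (porous) subset of $\mathbb{R}^N$ so that $\operatorname{dist}(\cdot,E)^{\gamma}$ is a bona fide strong $A_{\infty}$-weight. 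Extracting the required porosity (or Ahlfors regularity) from Assouad's embedding, and verifying the strong $A_{\infty}$ condition and the admissible range of $\gamma$, is the technical heart; everything else is bookkeeping with the doubling estimates already developed around Proposition \ref{prop:main}.
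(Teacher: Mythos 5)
Your proposal is correct in outline and follows essentially the same route the paper itself sketches for this cited result of Semmes: snowflake $(X,d)$, embed via Assouad's Theorem as a subset $E$ of some $\mathbb{R}^N$, and take the weight $w(x)=\operatorname{dist}(x,E)^{\gamma}$ to build the metric doubling measure (the paper fixes the snowflaking exponent $\tfrac12$, so that $\gamma=N$, whereas you keep $\varepsilon$ general, but this is immaterial). The technical points you flag -- porosity/regularity of the Assouad image and verification of the strong $A_\infty$ condition -- are exactly the ones the paper defers to \cite{Se00}.
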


This is a most encouraging result, and the idea of the proof is quite simple: By Assouad's Theorem, $(M,d^{\frac{1}{2}})$ is bilipschitz equivalent to a subset $Y$ of some $\mathbb{R}^N$. The sought for measure on $\mathbb{R}^N$ will be define as $\mu = {\rm dist}(x,Y^n)dx$ -- for details of the proof see \cite{Se00}.

One would naturally would hope that $(\mathbb{R}^n,q_{\mu,\frac{1}{n}})$ can be bilipschitzly embedded in some  $\mathbb{R}^N$, for any doubling measure $\mu$. This is a quite ambitious wish and, unfortunately, it is not true in general (see \cite{Se00}). 
However, such an embedding exists for ``most'' metric doubling measures -- for a precis formulation and the proof see \cite{Se00}. Still, we can formulate the fitting sampling result (recall that given the quasimetric $q_{\mu,s}$\,, there exists a metric $d_s$ bilipschitz equivalent to it):

\begin{thm}
\label{thm:meta-thm2}
Sampling of doubling metric spaces 
is bilipschitz equivalent quantitatively 
to the sampling of sets in $(\mathbb{R}^N,d_{\frac{1}{N}})$, for some $N$, where $d_{\frac{1}{N}}$ represents the snowflaked version of $d$, associated to a certain metric doubling measure $\mu$. 
\end{thm}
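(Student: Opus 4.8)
The plan is to deduce the statement directly from Semmes' embedding theorem quoted just above, together with the elementary fact that a bilipschitz bijection carries minimal $\varepsilon$-nets to nets whose covering and packing scales are both comparable to $\varepsilon$. So first, given a doubling metric space $(X,d)$ with doubling constant $D$, I would invoke that theorem to produce a natural number $N = N(D)$, a metric doubling measure $\mu$ on $\mathbb{R}^N$, and an $L$-bilipschitz map $\Phi$ of $(X,d)$ onto a subset $Y \subseteq (\mathbb{R}^N, q_{\mu,1/N})$. By the very definition of a metric doubling measure there is a metric $\delta$ --- this is the $d_{1/N}$ of the statement --- and a constant $C_6$ with $C_6^{-1}\delta \le q_{\mu,1/N} \le C_6\,\delta$; composing this bilipschitz comparison with $\Phi$ turns $\Phi$ into a $C_6 L$-bilipschitz bijection of $(X,d)$ onto $(Y, \delta|_Y)$. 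This is precisely the bilipschitz equivalence asserted, and it is \emph{quantitative} because $N$, $L$ and $C_6$ are all controlled in terms of $D$ alone (the snowflaking exponent used inside Semmes' argument is the fixed value $1/2$ of Assouad's theorem, which contributes nothing beyond $D$). If one prefers to keep the target phrased with the quasimetric $q_{\mu,1/N}$ itself rather than with $d_{1/N}$, the same argument applies verbatim, since bilipschitz comparisons of balls are insensitive to the failure of the triangle inequality by a bounded factor.

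Next I would record the net-transfer lemma that gives meaning to the phrase ``sampling $\dots$ is bilipschitz equivalent''. If $g\colon (A,d_A)\to(B,d_B)$ is a $\lambda$-bilipschitz bijection and $\{p_1,\dots,p_{n_0}\}$ is a minimal $\varepsilon$-net on $A$ with efficient packing, in the sense of Section 2, then the $d_B$-balls of radius $\lambda\varepsilon$ about the points $g(p_k)$ still cover $B$, the $d_B$-balls of radius $\varepsilon/(2\lambda)$ about them are still pairwise disjoint, and the intersection pattern is preserved up to the same multiplicative ambiguity $\lambda$; hence $\{g(p_k)\}$ is, up to the factor $\lambda$ in the relevant scales, again a minimal net. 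Applying this to $\Phi$ and to $\Phi^{-1}$ with $\lambda = C_6 L$ shows that the minimal $\varepsilon$-nets of $(X,d)$ and those of $(Y, d_{1/N})$ determine one another up to the bounded factor $C_6 L = C_6 L(D)$, which is exactly the quantitative equivalence of samplings claimed in the theorem.

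The main difficulty is not a single hard estimate but the honesty of the adverb ``quantitatively'': one must trace the chain Assouad $\rightsquigarrow$ Semmes $\rightsquigarrow$ metric doubling measure $\rightsquigarrow$ net transfer and confirm that every constant appearing depends only on $D$, and in particular that the auxiliary measure built inside Semmes' proof --- of the form $\mu = \operatorname{dist}(x,Y)\,dx$ on $\mathbb{R}^N$, or a minor variant --- really is a metric doubling measure with comparison constant $C_6 = C_6(D)$; but this last point is precisely the content of the cited Semmes theorem, so for the present argument it is a matter of invocation rather than of work. A secondary, purely definitional point to settle at the outset is that ``sampling'' here means a minimal $\varepsilon$-net together with its efficient packing, and that it is being compared across two different metrics on (a subset of) one set; once one notes, via Remark \ref{rem:atoms1}, that a doubling measure has atoms only at isolated points and such points belong to every net, the comparison is unambiguous and the theorem follows formally from the results already in hand.
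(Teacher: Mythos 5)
Your proposal is correct and follows essentially the same route as the paper, which states Theorem \ref{thm:meta-thm2} as an immediate consequence of Semmes' embedding theorem together with the bilipschitz comparison between $q_{\mu,1/N}$ and a genuine metric built into the definition of a metric doubling measure. The only thing you add beyond the paper's (largely implicit) argument is the explicit net-transfer lemma making precise what ``sampling is bilipschitz equivalent'' means, which is a welcome clarification rather than a deviation.
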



\section{Discussion and Final Comments}

A number of concluding remarks, regarding the relative advantages 
of the three sampling methods exposed above, are mandatory.

As far as simplicity is concerned, then obviously the third method is the preferred one, as emphasized already (even in the title of the article):  It is the most intuitive (at least for a geometer), employing just a quite simple metric. Of course, there exists a trade-off between precision and simplicity, ensuing from the fact that, if one insists on working with an actual metric and not a ``mere'' quasimetric, then he/she has to be content with approximation provided by Proposition \ref{prop:qm-equiv-metric}.
%
%
%
Moreover, it is highly adaptable, via the parameter $s$, that allows for sampling at different scale (as envisioned originally by Semmes, for quite different ends). This is not a negligible advantage, since in many application it is not a priori clear at what scale the data should be sampled at. Imaging data is the first example that cames to mind, see, e.g. \cite{Fl}, \cite{FRKV}, \cite{K}, \cite{SHKAS}. Astronomy (in its cosmological setting) represents, probably another such case. This problem appears even more poignant for data where little information regarding the structure of the data exists, and in particular no natural dimension is available. Data from bioinformatics appertains to this category. While the second (``measure decides'') method is also relatively simple, it shares with the last approach one common weakness, that is the fact that, in the absence of of curvature, there exists no way of determining the metric  density of the sampling points, akin to that of \cite{Ca} (see also \cite{SAZ} for an application in imaging).
Therefore, at this stage, this methods seem to be feasible only for data that is intrinsically ``almost flat'' (such as it appears in \cite{DG}).

This highlights the relative advantage of the curvature-based method upon the other two: While it is, admittedly, conceptually the more complicate, and computationally quite involved, it is the only one that satisfies the density condition above and, more important, it is the only algorithmic one.


However, on a more theoretical level, the superiority belongs to the metric method, since, due to Theorem \ref{thm:meta-thm1}, it reduces -- even quantitatively -- the sampling of quite general metric measure spaces to that of subsets of $\mathbb{R}^N$. This is also important in applications, since, by employing such methods as those devised e.g. in \cite{SAZ}, more general kind of data, and not just the ``almost flat'' one can be sampled (albeit by making appeal to the extrinsic curvature of a specific embedding).


\section*{Appendix - Generalized Ricci Curvature of Metric Measure Spaces}

We bring here only the minimal amount of definitions needed as a background material for Section 2.1. (The interested reader can consult, for further details, the exhaustive monograph \cite{Vi} and, of course, the original papers \cite{LV} and \cite{St}.)

\subsection{Smooth Metric Measure Spaces}

Let $M = M^n$ be a complete, connected $n$-dimensional Riemannian manifold.
One wishes to extend results regarding Ricci curvature to the case when $M^n$ is equipped with a measure that is not $d{\rm Vol}$. 
Usually (at least in our context) such a measure is taken to be of the form
\begin{equation}
\nu(dx) = e^{-V(x)}{\rm Vol}(dx)\,,
\end{equation}
where $V:M^n \rightarrow \mathbb{R}$, $V \in \mathcal{C}^2(\mathbb{R})$. Note also that any smooth positive probability measure can be written in this manner. Then $(M,d, \nu)$, where $d$ is the geodesic distance, is a metric measure space.

\begin{rem}
A standard measure $\nu$, in the context of image processing (but not only) is the gaussian measure on $\mathbb{R}^n$:
\begin{equation}
\gamma^{(n)} = \frac{e^{-|x|^2}dx}{(2\pi)^{n/2}}\,.
\end{equation}
%
\end{rem}

To 
preserve geometric significance of the Ricci tensor, one has to modify its definition as follows:
\begin{equation} \label{eq:Ric-N-Nu}
{\rm Ric}_{N,\nu} = {\rm Ric} + \nabla^2V - \frac{\nabla V \otimes
\nabla V}{N - n}
\end{equation}
Here $\nabla V \otimes \nabla V$ is a quadratic form on $TM^n$, and $\nabla^2V$ is the Hessian matrix ${\rm Hess}$, defined as:
\begin{equation}
(\nabla V \otimes \nabla V)_x(v) = (\nabla V(x) \cdot v)^2\,.
\end{equation}
Therefore
\begin{equation}
{\rm Ric}_{N,\nu}(\dot{\gamma}) = ({\rm Ric} + \nabla^2V)(\dot{\gamma}) - \frac{(\nabla V \cdot \dot{\gamma})^2}{N - n}\,.
\end{equation}
Here $N$ is the so called {\em effective dimension} and is to be inputed.

\begin{rem}
(i) If $N < n$ then ${\rm Ric}_{N,\nu} = -\infty$

\hspace*{1.7cm} (ii) If $N = n$ then, by convention, $0 \times \infty = 0$, therefore (\ref{eq:Ric-N-Nu}) is still defined even if $\nabla V = 0$, in particular
${\rm Ric}_{n,{\rm Vol}} = {\rm Ric}$ (since, in this case $V \equiv 0$).

\hspace*{1.7cm} (iii) If $N = \infty$ then ${\rm Ric}_{\infty,\nu} = {\rm Ric} + \nabla^2V$.
\end{rem}

The Ricci curvature boundedness condition of the classical Bishop-Gromov is paralleled in the case of smooth metric measure spaces by the following immediate generalization of the classical definition:

\begin{defn}
$(M,d, \nu)$ satisfies the {\it curvature-dimension} estimate ${\rm CD}(K,N)$ iff there exist $K \in \mathbb{R}$ and $N \in [1,\infty]$, such that  ${\rm Ric}_{N,\nu} \geq K$ and $n \leq N$. (If $\nu = d{\rm Vol}$, then the first condition reduces to the classical ${\rm Ric} \geq K$.)
\end{defn}

\begin{rem}
Intuitively, ``$M$ has dimension $n$ but pretends to have dimension $N$. (Identity theft)''\footnote{J. Lott \cite{L}.}

The need for such a parametric dimension stems, in particular, 
from the desire 
to extend the Bishop-Gromov Theorem to metric spaces (or more precisely, to length spaces), for which no innate notion of dimension exists.
\end{rem}

\begin{rem}
For a number of equivalent conditions, see \cite{Vi}, Theorem 14.8.
\end{rem}



\subsection{Weak $CD(K,N)$ Spaces}

\begin{defn}
Let $(X,\mu)$ and $(Y,\nu)$ be two measure spaces. A {\it coupling} (or {\it transference (transport) plan}) of $\mu$ and $\nu$ is a measure $\pi$ on $X \times Y$ with {\it marginals} $\mu$ and $\nu$ (on $X$ and $Y$, respectively), i.e. such that, 
for all measurable sets $A \subset X$ and $B \subset Y$, the following hold: $\pi[A \times Y] = \mu[A]$ and $\pi[X \times B] = \nu[B]$.
\end{defn}

\begin{defn}
Let $(X,\mu)$ and $(Y,\nu)$ be as above and let $c = c(x,y)$ be a (positive) cost function on $X \times Y$. Consider the {\it Monge-Kantorovich minimization problem}:
\begin{equation}
\inf\int_{X \times Y}c(x,y)d\pi(x,y)\,,
\end{equation}
where the infimum is taken over all the transport plans. The transport plans attaining the infimum are called {\it optimal transport (transference) plans}.
\end{defn}

Before we can proceed, we must recall the following definition and facts:

\begin{defn}
Let $(X,d)$ be a Polish space, and let $P(X)$ denote the set of Borel probability measures on $X$.
Then the {\it Wasserstein distance} (of order $2$) on $P(X)$ is defined as
\begin{equation}
W_2(\mu,\nu) = \left(\inf\int_X{d((x,y)^2d\pi(x,y)}\right)^\frac{1}{2}\,,
\end{equation}
where the infimum is taken over all the {\it transference plans} between $\mu$ and $\nu$.

\end{defn}

\begin{defn}
The {\it Wasserstein space} $P_2(X)$ is defined as
\begin{equation}
P_2(X) = \Big\{\mu \in P(X) \,\big|\, \int_X{d(x_0,x)^2\mu(dx)} < \infty \big\}\,,
\end{equation}
where $x_0 \in X$ is an arbitrary point.

\end{defn}

\begin{rem}
The definition above does not depend upon the choice of $x_0$ and $W_2$ is a metric on $P_2(X)$. Moreover, if $X$ is Polish (compact), $P_2(X)$ is also Polish (compact).
\end{rem}


\begin{defn}
Let $(X,d)$ be a compact, geodesic, Polish space and let $\Gamma = \{\gamma:[0,1] \rightarrow X\,|\, \gamma {\rm \; a \; minimal \; geodesic}\}$, and denote by $e_t:\Gamma \rightarrow X$ the (continuous) {\em evaluation map}, $e_t:(\gamma) = \gamma(t)$. Let $E:\Gamma \rightarrow X \times X$ be defined as $E(\gamma) = (e_0(\gamma),e_1(\gamma))$. A {\it dynamical transference plan} is a pair $(\pi,\Pi)$, where $\pi$ is a transference plan and $\Pi$ is a Borel measure, such that $E_\#\Pi = \pi$. $(\pi,\Pi)$ is called {\it optimal} if $\pi$ is optimal.
\end{defn}

\begin{defn}
Let $\Pi$ be an optimal dynamical transference plan. Then the one-parameter family $\{\mu_t\}_{t \in [0,1]}, \mu_t = (e_t)_\#\Pi$ is called a {\it displacement interpolation}
\end{defn}

We can now quote the following result (\cite{Vi}, Theorem 7.21 and Corollary 7.22), connecting the geometry of the Wasserstein space to
classical mass transport:

\begin{prop}
Any displacement interpolation is a Wasserstein geodesic, and conversely, any Wasserstein geodesic is obtained as a displacement interpolation from an optimal displacement interpolation.
\end{prop}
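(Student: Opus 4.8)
The plan is to prove the two implications in turn, using throughout that $X$ is compact and geodesic.

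\emph{Displacement interpolation $\Rightarrow$ Wasserstein geodesic.} Let $\mu_t=(e_t)_\#\Pi$ with $(\pi,\Pi)$ an optimal dynamical transference plan, and fix $0\le s\le t\le 1$. The measure $(e_s,e_t)_\#\Pi$ is a coupling of $\mu_s$ and $\mu_t$, so
\[ W_2(\mu_s,\mu_t)^2 \;\le\; \int_\Gamma d(\gamma(s),\gamma(t))^2\,d\Pi(\gamma) . \]
Since $\Pi$ is concentrated on constant-speed minimal geodesics, $d(\gamma(s),\gamma(t))=(t-s)\,d(\gamma(0),\gamma(1))$ for $\Pi$-a.e.\ $\gamma$; and since $(e_0,e_1)_\#\Pi=\pi$ is optimal, $\int_\Gamma d(\gamma(0),\gamma(1))^2\,d\Pi = W_2(\mu_0,\mu_1)^2$. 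Hence $W_2(\mu_s,\mu_t)\le(t-s)W_2(\mu_0,\mu_1)$, and combining this with the analogous bounds for $W_2(\mu_0,\mu_s)$ and $W_2(\mu_t,\mu_1)$ and with the triangle inequality $W_2(\mu_0,\mu_1)\le W_2(\mu_0,\mu_s)+W_2(\mu_s,\mu_t)+W_2(\mu_t,\mu_1)$ forces equality throughout; thus $W_2(\mu_s,\mu_t)=(t-s)W_2(\mu_0,\mu_1)$ and $\{\mu_t\}$ is a constant-speed geodesic in $P_2(X)$.

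\emph{Wasserstein geodesic $\Rightarrow$ displacement interpolation.} Let $\{\mu_t\}$ be a $W_2$-geodesic, so $W_2(\mu_s,\mu_t)=(t-s)W_2(\mu_0,\mu_1)$ by definition. Fix $n$ and the dyadic times $t_k=k/2^n$, $k=0,\dots,2^n$; choose optimal plans $\pi_k$ between $\mu_{t_{k-1}}$ and $\mu_{t_k}$ and glue them along their common marginals (the gluing lemma, via disintegration) into a probability measure $Q_n$ on $X^{2^n+1}$ whose consecutive two-dimensional marginals are the $\pi_k$. For any tuple $d(x_0,x_1)\le\sum_{k}d(x_{t_{k-1}},x_{t_k})$ and, by Cauchy--Schwarz, $\big(\sum_k d(x_{t_{k-1}},x_{t_k})\big)^2\le 2^n\sum_k d(x_{t_{k-1}},x_{t_k})^2$; integrating against $Q_n$ gives $\int d(x_0,x_1)^2\,dQ_n\le 2^n\sum_k W_2(\mu_{t_{k-1}},\mu_{t_k})^2 = W_2(\mu_0,\mu_1)^2$. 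Since the $(x_0,x_1)$-marginal of $Q_n$ is a coupling of $\mu_0$ and $\mu_1$, the reverse inequality also holds, so all of these are equalities; in particular $Q_n$-almost every tuple satisfies $d(x_0,x_1)=\sum_k d(x_{t_{k-1}},x_{t_k})$ with all consecutive distances equal, and the $(x_0,x_1)$-marginal of $Q_n$ is an optimal coupling. It then remains to lift $Q_n$ to the path space and pass to the limit: using a measurable selection $(x,y)\mapsto\gamma_{x,y}$ of minimal geodesics --- available because $X$ is compact geodesic Polish, so the nonempty set of minimal geodesics joining two given points is compact --- concatenate the geodesics joining consecutive points of a $Q_n$-typical tuple; by the equal-spacing and alignment properties the concatenation is again a constant-speed minimal geodesic, so this defines a Borel map $X^{2^n+1}\to\Gamma$, and we let $\Pi_n$ be the push-forward of $Q_n$. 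Then $(e_{t_k})_\#\Pi_n=\mu_{t_k}$ and $(e_0,e_1)_\#\Pi_n$ is optimal. Because $X$ is compact, $\Gamma\subset C([0,1],X)$ is compact (the geodesics being equi-Lipschitz), hence $\{\Pi_n\}$ is tight; let $\Pi$ be a weak limit along a subsequence. Weak continuity of $\Pi\mapsto(e_t)_\#\Pi$ gives $(e_t)_\#\Pi=\mu_t$ for every dyadic $t$, hence for every $t\in[0,1]$ by density together with the $W_2$-Lipschitz continuity of $t\mapsto\mu_t$ and of $t\mapsto(e_t)_\#\Pi$; and $(e_0,e_1)_\#\Pi$ is optimal because the set of optimal couplings of $\mu_0,\mu_1$ is closed under weak convergence (the cost $d(\cdot,\cdot)^2$ being continuous and bounded on the compact $X$). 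Thus $\big((e_0,e_1)_\#\Pi,\Pi\big)$ is an optimal dynamical transference plan with $\mu_t=(e_t)_\#\Pi$, i.e.\ $\{\mu_t\}$ is a displacement interpolation.

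\emph{Main obstacle.} The forward implication is a short soft argument; the work is in the converse. The delicate ingredients there are the gluing lemma for couplings, the measurable selection of geodesics needed to turn the finite-dimensional measure $Q_n$ into a measure $\Pi_n$ on the path space $\Gamma$, and the two stability facts (tightness of $\{\Pi_n\}$ and closedness of the set of optimal couplings under weak limits). Compactness of $X$ is precisely what makes all three manageable: it yields compactness of $\Gamma$, boundedness and continuity of the cost, and a clean selection theorem; without it one would need tightness assumptions on $\{\mu_t\}$ and a more careful (Kuratowski--Ryll-Nardzewski type) selection argument.
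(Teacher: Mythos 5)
The paper does not prove this proposition at all: it is quoted verbatim from Villani's monograph (Theorem 7.21 and Corollary 7.22 of \cite{Vi}), so there is no in-paper argument to compare against. Your proof is essentially the standard one found in that reference, and it is correct: the forward direction via the coupling $(e_s,e_t)_\#\Pi$ plus the triangle inequality, and the converse via dyadic subdivision, the gluing lemma, the equality case of Cauchy--Schwarz forcing equally spaced aligned points, a measurable selection of connecting geodesics, and a tightness/weak-limit argument on the compact path space $\Gamma$. Two small points worth making explicit: you implicitly use that the elements of $\Gamma$ are minimal geodesics parametrized with \emph{constant speed} on $[0,1]$ (so that $d(\gamma(s),\gamma(t))=(t-s)d(\gamma(0),\gamma(1))$), which is the intended convention here; and in the limiting step, $(e_t)_\#\Pi_n=\mu_t$ holds only for dyadic $t$ of level at most $n$, so the limit identity at a fixed dyadic $t$ uses that this holds for all sufficiently large $n$ --- which you essentially say, but should state cleanly. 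Neither affects correctness.
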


\begin{defn}
Given $N \in [1,\infty]$, the {\it displacement convexity class} $\mathcal{DC}_N$ is defined as the set of convex, continuous functions $U:\mathbb{R}_+ \rightarrow \mathbb{R}$, $U \in \mathcal{C}^2(\mathbb{R}_+ \setminus \{0\})$, such that $U(0) = 0$ and such that
\begin{equation}
\frac{rU'(r) - U(r)}{r^{1-1/N}}
\end{equation}
is nondecreasing (as a function of $r$).
%
\end{defn}

\begin{rem}
For equivalent defining conditions for the class $\mathcal{DC}_N$ see \cite{Vi}, Definition 17.1.
\end{rem}

\begin{defn}
Let $(X,d, \nu)$ be a a locally compact metric measure space, such that the measure $\nu$ is locally finite, and let $U$ be a
continuous, convex function $U:\mathbb{R}_+ \rightarrow \mathbb{R}$, $U \in \mathcal{C}^2(\mathbb{R}_+ \setminus \{0\})$, such that $U(0) = 0$. Consider a measure  $\mu$ on $X$, having compact support, and let $\mu = \rho\nu + \mu_s$ be its Lebesgue decomposition into absolutely continuous and singular parts.

Then we define the (integral) functional $U_\nu$ (with {\it nonlinearity} $U$ and {\it reference measure} $\nu$) by
\begin{equation}
U_\nu = \int_X U\big(\rho(x)\big)\nu(dx) + U'(\infty)\mu_s[X]\,.
\end{equation}

Moreover, if $\{\pi(dy|x)\}_{x \in X}$ is a family of probability measures on $X$ and if $\beta:U \times U \rightarrow (0,\infty]$ is a measurable function, we define an (integral) functional $U_{\pi,\nu}^\beta$ (with {\it nonlinearity} $U$, {\it reference measure} $\nu$, {\it coupling} $\pi$ and {\it distortion coefficient} $\beta$) by:
\begin{equation}
U_{\pi,\nu}^\beta = \int_{U \times U} U\left(\frac{\rho(x)}{\beta(x,y)}\right)\beta(x,y)\pi(dy|x)\nu(dx) + U'(\infty)\mu_s[X]\,.
\end{equation}

\end{defn}

Usually (e.g. in the definition of weak ${\rm CD}(K,N)$ spaces) $\beta$ is taken to be the {\it reference distortion coefficients}:

\begin{defn}
Let $x,y$ be two points in a metric space $(X,d)$, and consider the numbers $K \in $, $N \in [1,\infty]$ and  $t \in [0,1]$. We define the {\it reference distortion coefficients} $\beta^{(K,N)}_{t}(x,y)$ as follows:

\begin{enumerate}

\item If $t \in (0,1]$ and $1 < N < \infty$, then
\begin{equation}
\beta^{(K,N)}_{t}(x,y) = \left\{
                  \begin{array}{ll}
                    + \infty & \mbox{if $K > 0$ and $\alpha > \pi$}\,,\\\\
                    \Big(\frac{\sin{(t\alpha)}}{t\sin{\alpha}}\Big)^{N-1} & \mbox{if $K > 0$ and $\alpha \in [0,\pi]$}\,,\\\\
                    1 & \mbox{if $K = 0$}\,,\\\\
                     \Big(\frac{\sinh{(t\alpha)}}{t\sinh{\alpha}}\Big)^{N-1} & \mbox{if $K < 0$}\,;
                  \end{array}
           \right.
\end{equation}
where
\begin{equation}
\alpha = \sqrt{\frac{|K|}{N-1}}d(x,y)\,.
\end{equation}

\item In the limit cases $N \rightarrow 1$ and $N \rightarrow \infty$, define
\begin{equation}
\beta^{(K,1)}_{t}(x,y) = \left\{
                  \begin{array}{ll}
                   +\infty & \mbox{if $K > 0$}\,,\\\\
                   1 & \mbox{if $K \leq 0$}\,;
                  \end{array}
           \right.
\end{equation}

and
\begin{equation}
\beta^{(K,\infty)}_{t}(x,y) = e^{\frac{K}{6}(1-t^2)d(x,y)\,.}
\end{equation}

\item If $t = 0$, then
\begin{equation}
\beta^{(K,N)}_{0}(x,y) = 1\,.
\end{equation}

\end{enumerate}

\end{defn}
\begin{rem}
If $X$ is the model space for ${\rm CD}(K,N)$ (see \cite{Vi} p. 387), then $\beta^{(K,N)}$ is the distortion coefficient on $X$.
\end{rem}


We can now bring the definition we are interested in:

\begin{defn}
Let $(X,d,\nu)$ be a locally compact, complete, $\sigma$-finite metric measure geodesic space, and let $K \in \mathbb{R}, N \in [1,\infty]$.
We say that $(X,d,\nu)$ satisfies a {\em weak ${\rm CD}(K,N)$ condition} (or that it is a {\em weak ${\rm CD}(K,N)$ space}) iff for any two probability measures $\mu_0, \mu_1$ with compact supports ${\rm Supp}\,\mu_1, {\rm Supp}\,\mu_2 \subset {\rm Supp}\,\nu$, there exist a {\em displacement interpolation} ${\mu_t}_{0 \leq t \leq 1}$ and an associated optimal coupling $\pi$ of $\mu_0,\mu_1$ such that, for all $U \in \mathcal{DC}_N$, and for all $t \in [0,1]$, the following holds:
\begin{equation}
U_\nu(\mu_t) \leq (1-t)\,U_{\pi,\nu}^{\beta^{(K,N)}_{1-t}}(\mu_0) + t\,U_{\tilde{\pi},\nu}^{\beta^{(K,N)}_{t}}(\mu_1)
\end{equation}
(Here we denote ${\tilde{\pi}} = S_\#\pi$, where $S(x,y) = (y,x)$.)
\end{defn}

\begin{rem}
In fact, the geodesicity condition is somewhat superfluous, since a locally compact, complete metric space is geodesic (see, e.g. \cite{He}, 9.14).
\end{rem}


\section*{Acknowledgments}
The author wishes to thank Shahar Mendelson and Gershon Wolansky for the many discussions on the sampling of metric measure spaces, that largely motivated this paper at its incipient stages.



\end{document}